\documentclass{lmcs}

\usepackage{amsfonts}
\usepackage{amsmath}
\usepackage{amssymb}
\usepackage{bussproofs}
\usepackage{float}
\usepackage{mathtools}
\usepackage{color}
\usepackage[all]{xy}
\usepackage{bm}

\newcommand{\eqdef}{\coloneqq}
\newcommand{\ie}{\emph{i.e.}}
\newcommand{\eg}{\emph{e.g.}}

\newcommand{\IQL}{\mathsf{IQL}}
\newcommand{\CQL}{\mathsf{CQL}}
\newcommand{\MQL}{\mathsf{MQL}}

\newcommand{\HAo}{\mathsf{HA}^\omega}

\newcommand{\T}{\mathsf{T}}
\newcommand{\z}{0}
\newcommand{\suc}{\mathsf{suc}}
\newcommand{\rec}{\mathsf{rec}}
\newcommand{\N}{\mathbb{N}}

\newcommand{\Baire}{(\N\to\N)}

\newcommand{\ke}{\mathrm{ke}}
\newcommand{\rV}{\mathrm{V}}
\newcommand{\rM}{\mathrm{M}}
\newcommand{\rB}{\mathrm{B}}
\newcommand{\rS}{\mathrm{S}}

\newcommand{\fst}{\mathsf{pr}_1}
\newcommand{\snd}{\mathsf{pr}_2}

\newcommand{\inj}{\mathsf{inj}}
\newcommand{\inl}{\mathsf{inj}_1}
\newcommand{\inr}{\mathsf{inj}_2}
\newcommand{\pr}{\mathsf{pr}}
\newcommand{\tpair}{\mathsf{pair}}
\newcommand{\pair}[2]{\left\langle#1,\,#2\right\rangle}
\newcommand{\case}{\mathsf{case}}
\newcommand{\J}{\mathsf{J}}

\newcommand{\JN}{\J\N}

\newcommand{\triple}[3]{\left\langle#1,\,#2,\,#3\right\rangle}

\newcommand{\Pair}[2]{\left\langle#1,\, #2\right\rangle}

\newcommand{\cto}{\hspace{1pt}\mathord{\to}\hspace{1pt}}

\newcommand{\R}[1]{\mathrel{\mathrm{R}_{#1}}}

\newcommand{\maj}[1]{\mathrel{\triangleleft_{#1}}}

\newcommand{\E}[2]{\mathrel{\mathrm{R}^{#1}_{#2}}}

\newcommand{\nil}{\mathsf{nil}}

\newcommand{\Ge}{\mathrm{G}}

\newcommand{\ko}[1]{\langle #1 \rangle}

\newcommand{\ku}[1]{[#1]}

\makeatletter
\IfFileExists{./LMCS-logo.mf}{}{%
  \patchcmd{\endfront@text}{\headertextsf{Preprint submitted to}}{}{}
    {\PackageError{gmt}{Could not remove the preprint footer}{}}%
  \patchcmd{\endfront@text}{\headertextsf{Logical Methods in Computer Science}}{}{}
    {\PackageError{gmt}{Could not remove the journal footer}{}}%
}
\IfFileExists{./LMCS-logo.mf}{}{%
  \def\ps@firstpage{%
    \let\@mkboth\@gobbletwo
    \def\@oddhead{\hbox{\vrule width\textwidth height0pt depth.4pt}\hfil}%
    \let\@evenhead\@empty
    \let\@oddfoot\@empty
    \let\@evenfoot\@empty}}
\makeatother

\begin{document}

\title[The Gentzen-style monadic translation]{The Gentzen-style monadic translation\texorpdfstring{\\}{ }of G\"odel's System T revisited}

\author[C.~Xu]{Chuangjie Xu}

\keywords{monadic translation, G\"odel's System T, logical relation, negative translation, majorizability, continuity, bar recursion, dialogue tree, Agda}



\begin{abstract}
We revisit the Gentzen-style monadic translation of G\"odel's System~$\T$.
The translation is parametrized by a nucleus, a monad-like structure that need not satisfy the monad laws.
A fundamental theorem of logical relations provides a uniform correctness argument for its instances.
By choosing suitable nuclei, we obtain majorants, moduli of pointwise and uniform continuity, internal dialogue trees, and functionals of general bar recursion, all represented by terms of~$\T$.
The internal dialogue-tree application gives a simpler construction, with correctness established directly using Church-encoded terms.
We also extend the translation and its fundamental theorem to sums and finite lists.
Finally, we develop the Kuroda-style translation and its continuity instance, comparing the moduli obtained with those of the Gentzen-style translation.
\end{abstract}

\maketitle


\section{Introduction}
\label{sec:intro}

G\"odel's System~$\T$ provides a language for primitive recursive functionals of finite type.
From its terms, one can extract majorants, moduli of continuity, and functionals witnessing general bar recursion.
These constructions have been studied by several methods~\cite{howard:majorizable,kohlenbach:majorization,escardo:dialogue,OS:bar}.
A common question is how to obtain such witnesses systematically while keeping them definable in System~$\T$.

The Gentzen-style monadic translation addresses this question through a single syntactic construction parametrized by a \emph{nucleus}.
In its basic form, a nucleus consists of a type and two terms of~$\T$, with types analogous to the unit and Kleisli extension of a monad, but without the requirement that they satisfy the monad laws.
The nucleus determines the translation of the base type and the numerical constants; function and product types are translated componentwise, while abstraction and application retain their form.
Since the nucleus itself is given within~$\T$, the translated terms and the witnesses extracted from them remain in~$\T$.
A fundamental theorem of logical relations provides the basis for verifying these witnesses: if a suitable logical relation is preserved by the operations of the nucleus, then every term is related to its translation.
The central task in each application is therefore to choose a nucleus carrying the desired information and a logical relation expressing its correctness.

The construction was motivated by the syntactic treatment of bar recursion by Oliva and Steila~\cite{OS:bar} and its adaptation to continuity~\cite{xu:tcont}.
Its proof-theoretic inspiration comes from generalizations of negative translations in which double negation is replaced by a nucleus~\cite{EO:peirce,ishihara:GG,vdb:kuroda}.
Under the proofs-as-programs correspondence, our translation follows the pattern of Gentzen's negative translation.
The resulting framework gives a uniform account of majorizability, pointwise and uniform continuity, general bar recursion, and internal dialogue trees.
In each case, the choice of nucleus determines the computational information obtained, while the fundamental theorem supplies a common correctness argument.

This paper is an extended version of our FSCD 2020 paper~\cite{xu:gentzen2020}.
In addition to revising the exposition, we develop the following extensions:
\begin{itemize}
\item We extend the lifting approach uniformly to functions $X\to\N$ at arbitrary finite types~$X$.
This allows properties of such functions to serve as the base case of a logical relation, providing a systematic method for proving them for $\T$-definable functions by checking suitable preservation conditions.
\item We give a substantially simpler approach to internal dialogue trees than the internal effectful forcing construction of~\cite{escardo:internal}.
By choosing a suitable nucleus, we extract a term of~$\T$ encoding a dialogue tree that computes a given $\T$-definable functional $(\N\to\N)\to\N$.
Correctness follows by applying the fundamental theorem to a relation defined directly using the internal evaluator.
\item We extend the translation to finite lists and prove a fundamental theorem covering both sums and lists.
This also supplies the correctness theorem left open for sums in the conference version.
\item We develop the Kuroda-style translation further by giving a fundamental theorem and an explicit continuity nucleus satisfying its conditions.
This provides the correctness argument for the comparison of continuity moduli extracted by the Gentzen- and Kuroda-style translations, illustrating how their different treatments of application affect the witnesses obtained.
\end{itemize}

Section~\ref{sec:background} recalls the proof-theoretic motivation and fixes notation for System~$\T$.
Section~\ref{sec:trans} defines the Gentzen-style translation and proves its fundamental theorem, and Section~\ref{sec:app} develops the applications.
Section~\ref{sec:gen} treats sums and lists using generalized nuclei.
Section~\ref{sec:variants} presents the Kuroda-style translation and its continuity application, followed by a brief account of the Kolmogorov-style translation and a comparison of the three styles.
Section~\ref{sec:agda} discusses the Agda development, and Section~\ref{sec:conclusion} concludes with directions for future work.

\section{Background}
\label{sec:background}

We recall Gentzen's proof-theoretic translation that motivates our construction and fix notation for G\"odel's System~$\T$.

\subsection{Proof-theoretic translations}
\label{sec:neg-trans}

Gentzen's translation\footnote{Nowadays it is known as the G\"odel-Gentzen negative translation. G\"odel's translation places a double negation also in front of the clause for implication, which makes it different from Gentzen's in affine logic~\cite{AO:affine}.} directly places a double negation in front of atomic formulas, disjunctions and existential quantifiers~\cite{TS:book}.
One can replace double negation with a \emph{nucleus} relative to a logic $\mathsf{L}$: an endofunction $j$ on formulas that introduces no new free variables and for which, for any formulas $A,B$, the following statements are provable in $\mathsf{L}$:
\[
A \to jA
\qquad
(A \to jB) \to jA \to jB
\qquad
(jA)[t/x] \leftrightarrow j(A[t/x]).
\]
Nuclei are also known as \emph{lax modalities}~\cite{aczel:modality} and \emph{strong monads}~\cite{EO:peirce}.
Here we adopt the terminology and definition from~\cite{vdb:kuroda}, which provided the technical motivation for this work.
If~$j$ is a nucleus relative to minimal predicate logic $\MQL$, it determines a proof-theoretic translation into itself, consisting of a formula translation $A \mapsto A^\Ge_j$ defined as follows:
\[
 \begin{aligned}
 (A \to B)^\Ge_j & \eqdef A^\Ge_j \to B^\Ge_j &&\quad&
 P^\Ge_j & \eqdef j P \qquad \text{for atomic $P$, incl.~$\bot$} \\
 (A \wedge B)^\Ge_j & \eqdef A^\Ge_j \wedge B^\Ge_j &&&
 (A \vee B)^\Ge_j & \eqdef j (A^\Ge_j \vee B^\Ge_j) \\
 (\forall x A)^\Ge_j & \eqdef \forall x A^\Ge_j &&&
 (\exists x A)^\Ge_j & \eqdef j \exists x A^\Ge_j
 \end{aligned}
\]
and a soundness theorem stating that $\MQL \vdash A$ implies $\MQL \vdash A^\Ge_j$.
Particular choices of nucleus also give sound translations between different logical systems:
\begin{itemize}
\setlength\itemsep{1pt}
\item If $jA= (A \to \bot) \to \bot$, then $\CQL\vdash A$ implies $\MQL \vdash A^\Ge_j$.
\item If $jA=(A \to R) \to R$ for a fixed sentence $R$, then $\CQL\vdash A$ implies $\IQL \vdash A^\Ge_j$.
\item If $jA=A\vee \bot$, then $\IQL\vdash A$ implies $\MQL \vdash A^\Ge_j$.
\end{itemize}
Here $\CQL$ stands for classical predicate logic and $\IQL$ for intuitionistic predicate logic. These results are well-known (see \eg~\cite{ishihara:GG,vdb:kuroda}) and various instances of the translation have been applied in term extraction (see \eg~\cite{EO:peirce,kohlenbach:book}).

Under the proofs-as-programs correspondence, our translation of System~$\T$ (Definition~\ref{def:trans}) follows Gentzen's pattern on the simply typed lambda-calculus fragment: it changes the base type while translating function and product types componentwise.

\subsection{G\"odel's System T}
\label{sec:T}

We work with G\"odel's System~$\T$ and add binary products.
Products are a notational convenience: they make it easier to package the values and witnesses used in the applications in Section~\ref{sec:app}.
They are not essential, because one can instead work with tuples of types and terms.

The \emph{types} and \emph{terms} of System~$\T$ are given by the following grammar:
\[
\begin{aligned}
& \text{Type} & \sigma,\tau & \Coloneqq \N \mid \sigma\to\tau \mid \sigma\times\tau \\
& \text{Term} & t,u  & \Coloneqq x \mid \lambda x^\sigma.t \mid t u \mid \z \mid \suc \mid \rec_\sigma \mid \tpair_{\sigma,\tau} \mid \fst^{\sigma,\tau} \mid \snd^{\sigma,\tau}
\end{aligned}
\]
where $\N$ is the base type of natural numbers, $\sigma \to \tau$ the type of functions from $\sigma$ to $\tau$, and $\sigma\times\tau$ the product type.

A \emph{context} $\Gamma$ is a finite set of typed variables, with each variable assigned a unique type. A \emph{typing judgment} takes the form $\Gamma \vdash t : \tau$. The typing and basic computation rules are collected in Figure~\ref{fig:T-rules}. In the abstraction rule, $\Gamma,x:\sigma$ denotes $\Gamma\cup\{x:\sigma\}$ with $x$ fresh for $\Gamma$.

\begin{figure}[H]
\centering
\small
\begin{minipage}{\textwidth}
\noindent\textit{Typing}
\[
\begin{gathered}
\frac{x:\sigma\in\Gamma}{\Gamma\vdash x:\sigma}
\qquad
\frac{\Gamma,x:\sigma\vdash t:\tau}{\Gamma\vdash\lambda x^\sigma.t:\sigma\to\tau}
\qquad
\frac{\Gamma\vdash t:\sigma\to\tau \qquad \Gamma\vdash u:\sigma}
     {\Gamma\vdash tu:\tau}
\\[2ex]
\Gamma\vdash\z:\N
\qquad
\Gamma\vdash\suc:\N\to\N
\qquad
\Gamma\vdash\rec_\sigma:\sigma\to(\N\to\sigma\to\sigma)\to\N\to\sigma
\\[1ex]
\Gamma\vdash\tpair_{\sigma,\tau}:\sigma\to\tau\to\sigma\times\tau
\qquad
\Gamma\vdash\fst^{\sigma,\tau}:\sigma\times\tau\to\sigma
\qquad
\Gamma\vdash\snd^{\sigma,\tau}:\sigma\times\tau\to\tau
\end{gathered}
\]
\par\smallskip\hrule height 0.35pt\smallskip
\noindent\textit{Computation}
\[
\begin{gathered}
(\lambda x^\sigma.t)u=t[u/x]
\\[1ex]
\rec_\sigma(a,f,\z)=a
\qquad
\rec_\sigma(a,f,\suc\,n)=f(n,\rec_\sigma(a,f,n))
\\[1ex]
\fst^{\sigma,\tau}(\tpair_{\sigma,\tau}({a},{b}))=a
\qquad
\snd^{\sigma,\tau}(\tpair_{\sigma,\tau}({a},{b}))=b
\end{gathered}
\]
\end{minipage}
\caption{Typing and basic computation rules for System~$\T$ with products.}
\label{fig:T-rules}
\end{figure}
We call a term~$t$ \emph{well typed} if $\Gamma\vdash t:\tau$ is derivable for some context $\Gamma$ and type $\tau$, and consider only such terms throughout.
When the context is clear, we indicate its type by writing $t:\tau$ or $t^\tau$.
We omit type annotations on bound variables and type parameters on constants when they can be inferred. We use the following abbreviations:
\begin{itemize}
\item $\lambda x_1\cdots x_n.t$ for the iterated abstraction $\lambda x_1.\cdots\lambda x_n.t$;
\item $f(a_1,\ldots,a_n)$ for the left-associated application $(\cdots((fa_1)a_2)\cdots)a_n$;
\item $n+1$ for $\suc\,n$; and
\item $f\circ g$ for $\lambda x.f(gx)$.
\end{itemize}

Using the primitive recursor, we can for instance define the function $\max:\N \to \N \to \N$ that returns the greater argument as follows:
\[
\max \eqdef \rec_{\N \to \N} (\lambda n^\N.n , \, \lambda n^\N f^{\N \to \N}. \rec_\N (n+1, \, \lambda m^\N g^\N. (fm)+1)).
\]
One can easily verify that the usual defining equations of $\max$
\[
\max(\z,n) = n
\qquad
\max(m,\z) = m
\qquad
\max(m+1, n+1) = \max (m , n) + 1
\]
hold using the computation rules of $\rec$ in Figure~\ref{fig:T-rules}. For readability, we will use defining equations rather than $\T$-terms involving $\rec$ in the paper.



\section{The Gentzen-style monadic translation}
\label{sec:trans}

We define a syntactic translation of System~$\T$ motivated by the proof-theoretic translation of Section~\ref{sec:neg-trans}, and then establish its soundness through a fundamental theorem of logical relations.

\subsection{Nuclei and the translation}
\label{sec:nuclei-translation}

The translation is parametrized by a nucleus, a monad-like structure that need not satisfy the monad laws.

\begin{defi}[Nucleus]
A \emph{nucleus} relative to~$\T$ is a triple $(\J\N,\eta,\kappa)$ consisting of a type~$\J\N$ and two terms
\[
\eta : \N \to \J\N
\qquad
\kappa : (\N \to \JN) \to \JN \to \JN
\]
of System~$\T$.
\end{defi}

Note that a nucleus is \emph{not} an extension of~$\T$, but instead a simple structure given by a type and two terms of~$\T$.
Therefore, the translated terms remain in~$\T$ rather than some monadic metatheory such as in~\cite{powell:ccht,uustalu:monad}.
The simplest example is the identity nucleus with
\[
\J\N=\N
\qquad
\eta(n)=n
\qquad
\kappa(g,n)=g(n).
\]
More examples are available in Section~\ref{sec:app}.
The notation $\J\N$ anticipates the generalized nuclei of Section~\ref{sec:gen-nucleus}, where $\J$ is defined on every type of~$\T$.

We are now ready to construct a syntactic translation of $\T$ into itself:
\begin{defi}[$\J$-translation]
\label{def:trans}
Given a nucleus $(\J\N,\eta,\kappa)$, we assign to each type $\rho$ of~$\T$ a type $\rho^\J$ as follows:
\[
\begin{aligned}
\N^\J & \eqdef \J\N \\
(\sigma \to \tau)^\J & \eqdef \sigma^\J \to \tau^\J \\
(\sigma \times \tau)^\J & \eqdef \sigma^\J \times \tau^\J.
\end{aligned}
\]
Each term $\Gamma \vdash t : \rho$ is translated to a term $\Gamma^\J \vdash t^\J : \rho^\J$, where $\Gamma^\J$ replaces each $x:\sigma \in \Gamma$ with a fresh variable $x^\J : \sigma^\J$, and $t^\J$ is defined inductively as follows:
\[
\begin{aligned}
(x)^\J & \eqdef x^\J & 
\z^\J & \eqdef \eta \hspace{1pt} \z & 
(\tpair_{\sigma,\tau})^\J &\eqdef \tpair_{\sigma^\J,\tau^\J} \\
(\lambda x.t)^\J & \eqdef \lambda x^\J.t^\J & 
\suc^\J & \eqdef \kappa (\eta \circ \suc) & 
(\fst^{\sigma,\tau})^\J &\eqdef \fst^{\sigma^\J,\tau^\J} \\
(tu)^\J & \eqdef t^\J u^\J & 
(\rec_\sigma)^\J & \eqdef \lambda x^{\sigma^\J} f^{\J\N \to \sigma^\J \to \sigma^\J} . \hspace{1pt} \ke_\sigma \left( \rec_{\sigma^\J} \left( x , f \circ \eta \right) \right) & 
(\snd^{\sigma,\tau})^\J &\eqdef \snd^{\sigma^\J,\tau^\J}
\end{aligned}
\]
where $\ke_\rho : (\N \to \rho^\J) \to \J\N \to \rho^\J$ is an extension of~$\kappa$ defined inductively on~$\rho$:
\[
\begin{aligned}
\ke_\N & \eqdef \kappa \\
\ke_{\sigma \to \tau} & \eqdef \lambda g^{\N \to \sigma^\J \to \tau^\J} a^{\J\N} x^{\sigma^\J}. \hspace{1pt} \ke_\tau \left( \lambda n^\N. g(n,x),a \right) \\
\ke_{\sigma\times\tau} &\eqdef \lambda g^{\N\to\sigma^\J\times\tau^\J} a^{\JN}.
\pair{\ke_\sigma(\fst\circ g,a)}{\ke_\tau(\snd\circ g,a)}.
\end{aligned}
\]
We often write $\J$ to denote the nucleus $(\JN,\eta,\kappa)$ and call the above the $\J$-\emph{translation} of $\T$.
\end{defi}

Thanks to the inductive translation of function types into function types, the translation of the simply typed $\lambda$-calculus fragment of~$\T$ is straightforward. There is no need to introduce a nonstandard, monadic notion of function application, which plays an essential role in the other monadic translations~\cite{powell:ccht,uustalu:monad} as discussed in Section~\ref{sec:variants}.

The more interesting part is the translation of the constants. Viewing $\eta$ as a \emph{unit} operator and $\kappa$ as a \emph{bind} operator in a monad~$\J$ on~$\N$ may reveal some intuition behind the translation of $\z$ and $\suc$: It is natural to expect ${\underline{n}}^\J = \eta \underline{n}$ for each numeral $\underline{n} \eqdef \suc^n(\z)$. This is indeed the case if the monad laws are satisfied, because $\kappa(\eta \circ -) : (\N \to \N) \to \J\N \to \J\N$ which is used to translate $\suc$ recovers exactly the `functoriality' of~$\J$. We do \emph{not} require $(\rec_\sigma)^\J$ to preserve the computation rules as follows:
\[
(\rec_\sigma)^\J(x,f,\z^\J) = x
\qquad
(\rec_\sigma)^\J(x,f,(\suc\, n)^\J) = f(n^\J, (\rec_\sigma)^\J(x,f,n^\J)).
\]
However, these equations motivate the construction of $(\rec_\sigma)^\J$:
The ordinary recursor gives $\rec_{\sigma^\J}(x,f\circ\eta):\N\to\sigma^\J$, but the translated recursor must accept an argument of type $\J\N$; extending $\kappa$ to $\ke_\sigma:(\N\to\sigma^\J)\to\J\N\to\sigma^\J$ supplies this missing step.

\subsection{The fundamental theorem of logical relations}
\label{sec:logical-relations}

We adopt the standard technique of logical relations to show that the above translation is sound in the sense that each term of~$\T$ is related to its translation\footnote{We owe the idea of proving a unified theorem of logical relations to Thomas Powell.}. Because the translation is parametrized by a nucleus, we have to assume that the logical relation holds for the nucleus.

\begin{thm}[Fundamental Theorem of Logical Relations]
\label{thm:LR}
Let $(\J\N,\eta,\kappa)$ be a nucleus.
Given a binary relation ${\R{\N}} \subseteq \N \times \J\N$ between terms of $\T$, we extend it to ${\R{\rho}} \subseteq \rho \times \rho^\J$ for an arbitrary type $\rho$ of $\T$ by defining
\[
\begin{aligned}
f \R{\sigma\to\tau} g &\eqdef \forall x^{\sigma} a^{\sigma^\J} \left( x \R{\sigma} a \to fx \R{\tau} ga \right),\\
u \R{\sigma\times\tau} v &\eqdef (\fst u\R\sigma\fst v)\wedge(\snd u\R\tau\snd v).
\end{aligned}
\]
We extend the relation to contexts as well: for $\Gamma=\{x_1:\sigma_1,\ldots,x_n:\sigma_n\}$, write $\Gamma\R{}\Gamma^\J$ for $x_1\R{\sigma_1}x_1^\J\wedge\cdots\wedge x_n\R{\sigma_n}x_n^\J$, with the empty conjunction understood as true.
If ${\R{\N}}$ satisfies
\[
\label{eq:LR}\tag{$\dagger$}
\forall n^\N \left( n \R{\N} \eta n \right)
\qquad
\text{and}
\qquad
\forall f^{\N\to\N} g^{\N\to\JN} \left( \forall n^\N \left(fn \R{\N} gn\right) \to f \R{\N\to\N} \kappa g \right)
\]
then, for every $\Gamma\vdash t:\rho$,
\[
\Gamma\R{}\Gamma^\J \quad\Longrightarrow\quad t\R{\rho}t^\J.
\]
In particular, $t\R{\rho}t^\J$ for every closed term $t:\rho$ of $\T$.
\end{thm}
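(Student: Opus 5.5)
We prove the statement by induction on the typing derivation of $\Gamma\vdash t:\rho$, assuming $\Gamma\R{}\Gamma^\J$ throughout. The relation $\R{}$ is meant in the usual open-term sense: $t\R{\rho}t^\J$ is read under the hypotheses $x_i\R{\sigma_i}x_i^\J$ for the free variables. Equivalently, we prove that for all closed substitutions related pointwise, the substituted terms are related.

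The $\lambda$-calculus cases are routine.
\begin{itemize}
\item \emph{Variables.} The claim is one of the hypotheses in $\Gamma\R{}\Gamma^\J$.
\item \emph{Application.} It is immediate from the definition of $\R{\sigma\to\tau}$ and the two induction hypotheses.
\item \emph{Abstraction.} Given $x\R{\sigma}a$, we extend the context by $x:\sigma$ and apply the induction hypothesis with $x^\J:=a$. This yields $t[x]\R{\tau}t^\J[a/x^\J]$. Using the $\beta$-rule, this is $(\lambda x.t)x\R{\tau}(\lambda x^\J.t^\J)a$.
\end{itemize}
For this to be sound we need that each $\R{\rho}$ is closed under the equality of Figure~\ref{fig:T-rules}. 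This holds at $\N$ if $\R{\N}$ is a relation on terms up to conversion, and it then propagates through the definitions. I would state this as a preliminary remark.

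The product constants are direct. For $\tpair$, the components are recovered by the $\fst$/$\snd$ rules, and for $\fst$ and $\snd$ the claim is just the definition of $\R{\sigma\times\tau}$. For the numerals:
\begin{itemize}
\item $\z\R{\N}\eta\z$ is the first half of $(\dagger)$.
\item For $\suc$, we must show $\suc\R{\N\to\N}\kappa(\eta\circ\suc)$. This follows from the second half of $(\dagger)$ with $f=\suc$ and $g=\eta\circ\suc$, since $\suc n\R{\N}\eta(\suc n)$ again holds by the first half.
\end{itemize}

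The main work is the recursor. First I prove a lemma generalizing the second half of $(\dagger)$ to all types: for every $\rho$, all $h:\N\to\rho$ and $g:\N\to\rho^\J$,
\[
\forall n^\N\,(hn\R{\rho}gn)\ \to\ h\R{\N\to\rho}\ke_\rho g .
\]
The proof is by induction on $\rho$.
\begin{itemize}
\item At $\N$ it is exactly $(\dagger)$.
\item At $\sigma\to\tau$, take $m\R{\N}a$ and $x\R{\sigma}b$. We have $h(n,x)\R{\tau}g(n,b)$ for all $n$. The induction hypothesis at $\tau$ applied to $\lambda n.h(n,x)$ and $\lambda n.g(n,b)$ gives $h(m,x)\R{\tau}\ke_\tau(\lambda n.g(n,b),a)=\ke_{\sigma\to\tau}(g,a,b)$.
\item At $\sigma\times\tau$, apply the induction hypotheses to $\fst\circ h,\fst\circ g$ and $\snd\circ h,\snd\circ g$.
\end{itemize}

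Now fix $x\R{\sigma}x'$ and $f\R{\N\to\sigma\to\sigma}f'$, and put $h=\rec_\sigma(x,f)$ and $g=\rec_{\sigma^\J}(x',f'\circ\eta)$. I show $\forall n\,(hn\R{\sigma}gn)$ by induction on $n$.
\begin{itemize}
\item The base case is $x\R{\sigma}x'$.
\item In the step, $h(n+1)=f(n,hn)$ and $g(n+1)=f'(\eta n,gn)$. Since $n\R{\N}\eta n$ by $(\dagger)$, the claim follows from the induction hypothesis and $f\R{}f'$.
\end{itemize}
The lemma then gives $h\R{\N\to\sigma}\ke_\sigma g$, which unfolds to $\rec_\sigma(x,f)\R{\N\to\sigma}(\rec_\sigma)^\J(x',f')$ as required.

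The main obstacle I expect is this induction on $n$. It requires the meta-level numerals to exhaust $\N$, that is, an induction principle on the object-level argument $n$. It must be carried out in a metatheory such as $\HAo$, where $n$ ranges over natural-number values, or else for closed terms using normalization to numerals. I would choose the $\HAo$ reading, in which the induction on $n$ is available directly, and note that the closed-term corollary is just the case $\Gamma=\emptyset$.
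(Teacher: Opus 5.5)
Your proposal is correct and follows essentially the same route as the paper. Both argue by structural induction on $t$, with the routine $\lambda$-calculus and product cases and the $\suc$ case from both halves of $(\dagger)$; the recursor is handled by the lemma that $\ke_\sigma$ preserves the relation (induction on $\sigma$), combined with an induction on $n$ relating $\rec_\sigma(x,f,n)$ to $\rec_{\sigma^\J}(x',f'\circ\eta,n)$. Your remarks on closure under conversion and on working in $\HAo$ simply make explicit what the paper leaves implicit in its abstraction case and its subsequent remark.
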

\begin{proof}
We prove the statement by structural induction on $t$, assuming $\Gamma\R{}\Gamma^\J$.
\begin{itemize}
\item $t=x$. By the assumption $\Gamma \R{} \Gamma^\J$.
\item $t=\lambda x.u$. Given $x\R{\sigma}x^\J$, the extended contexts are related. By the induction hypothesis, $u\R{\tau}u^\J$; hence $\lambda x.u\R{\sigma\to\tau}\lambda x^\J.u^\J$ by the arrow clause.
\item $t=uv$. By the induction hypothesis we have $u \R{\sigma \to \tau} u^\J$ and $v \R{\sigma} v^\J$. Then, by the definition of $\R{\sigma \to \tau}$, we have $uv \R{\tau} u^\J v^\J$.
\item $t=\z$. By the condition in~(\ref{eq:LR}) for $\eta$.
\item $t=\suc$. By the condition in~(\ref{eq:LR}) for $\eta$, we have $\suc(n) \R{\N} \eta(\suc(n))$ for all $n:\N$. Then by the condition in~(\ref{eq:LR}) for $\kappa$, we have $\suc \R{\N\to\N} \kappa(\eta \circ \suc)$.
\item $t=\tpair$. If $a\R\sigma a'$ and $b\R\tau b'$, the product computation rules give $\pair{a}{b}\R{\sigma\times\tau}\pair{a'}{b'}$. Applying the arrow clause twice proves that $\tpair$ is related to its translation.
\item $t=\fst$ or $t=\snd$. If $u\R{\sigma\times\tau}v$, the product clause gives $\fst u\R\sigma\fst v$ and $\snd u\R\tau\snd v$. These are precisely the conditions for the projections to be related to their translations.
\item $t=\rec$. We prove $\rec \R{} \rec^\J$ with the following claims:
 \begin{enumerate}
 \item For any type $\sigma$ of $\T$, the term $\ke_\sigma$ preserves the logical relation in the following sense:
 \[
 \forall f^{\N \to \sigma} g^{\N \to \sigma^\J} \left( \forall n^\N \left (fn \R{\sigma} gn\right) \to f \R{\N\to\sigma} \ke_\sigma(g) \right).
 \]
 \textsf{Proof.}
 We argue by induction on $\sigma$.
 
 At $\sigma=\N$, the claim is exactly the second condition in (\ref{eq:LR}), since $\ke_\N=\kappa$.
 
 Suppose $\sigma=\tau\to\rho$ and $fn\R{\tau\to\rho}gn$ for all $n:\N$. To prove $f\R{\N\to\tau\to\rho}\ke_{\tau\to\rho}(g)$, take $m:\N$ and $a:\JN$ with $m\R\N a$, and $u:\tau$ and $v:\tau^\J$ with $u\R\tau v$. For every $n:\N$, the hypothesis gives $f(n,u)\R\rho g(n,v)$. The induction hypothesis at $\rho$, applied to the functions $\lambda n.f(n,u)$ and $\lambda n.g(n,v)$ and then to $m\R\N a$, yields
 \[
 f(m,u) \R \rho\ke_\rho(\lambda n.g(n,v),a)
 \]
 where the right-hand side is $\ke_{\tau\to\rho}(g,a,v)$.
 This is the required arrow relation.

 For $\sigma=\tau\times\rho$, suppose $fn\R{\tau\times\rho}gn$ for every $n:\N$, and take $m\R\N a$. Applying the induction hypotheses to the two pairs of functions $\fst\circ f,\fst\circ g$ and $\snd\circ f,\snd\circ g$ gives
 \[
 \fst(fm)\R\tau\ke_\tau(\fst\circ g,a)
 \qquad
 \snd(fm)\R\rho\ke_\rho(\snd\circ g,a).
 \]
 By definition, these imply $fm\R{\tau\times\rho}\ke_{\tau\times\rho}(g,a)$.
 \qed

 \item For any $x^\sigma$ and $y^{\sigma^\J}$ with $x \R{\sigma} y$, and any $f^{\N \to \sigma \to \sigma}$ and $g^{\JN \to \sigma^\J \to \sigma^\J}$ with $f \R{\N \to \sigma \to \sigma} g$,
 \[
 \forall n^\N \left( \rec_{\sigma}(x,f,n) \R{\sigma} \rec_{\sigma^\J}(y,g\circ\eta,n) \right).
 \]
 \textsf{Proof.}
 We argue by induction on $n$. At $n=0$, the recursor equations reduce the claim to $x\R\sigma y$. For the induction step, the first condition in (\ref{eq:LR}) gives $n\R\N\eta n$. Applying $f\R{}g$ first to these related arguments and then to the induction hypothesis gives
 \[
 f(n,\rec_\sigma(x,f,n))\R\sigma
 g(\eta n,\rec_{\sigma^\J}(y,g\circ\eta,n)).
 \]
 The recursor equations identify these terms with the two sides of the claim at $n+1$.
 \qed
 \end{enumerate}
 We get a proof of $\rec \R{} \rec^\J$ simply by applying (1) to (2). \qedhere
\end{itemize}
\end{proof}

\begin{rem}
The above proof can be carried out in intuitionistic Heyting arithmetic in finite types $\HAo$~\cite{TvD:constructivism}, with the theorem formulated as
\begin{quote}
if $\HAo$ proves (\ref{eq:LR}), then for every $\Gamma\vdash t:\rho$ of $\T$, $\HAo$ proves $\Gamma\R{}\Gamma^\J \to t\R{\rho}t^\J$.
\end{quote}
The applications in Section~\ref{sec:app} can likewise be verified in $\HAo$, with function extensionality additionally assumed for some of them.
We leave the metatheory unspecified in the theorem for several reasons. Firstly, we hope to study other properties whose verification may require a stronger system as in~\cite{powell:ccht}. Moreover, what we have proved in the Agda formalization is a version of the theorem for the Agda embedding of System $\T$, namely that the Agda interpretations of any $\T$-term and its translation are related. Section~\ref{sec:agda} describes this formulation.
\end{rem}

\section{Applications of the monadic translation}
\label{sec:app}
We now apply the above framework to reveal various properties and structures of $\T$-definable functions including majorizability, (uniform) continuity, internal dialogue trees and bar recursion. Each example consists of an algorithm to construct the desired structure given by the monadic translation, and a correctness proof of the algorithm given by the fundamental theorem of logical relations.  For this, one only needs to choose a suitable nucleus that satisfies the logical relation for the target property.

\subsection{Majorizability}
\label{sec:maj}
Our first application is to recover Howard's majorizability theorem of System~$\T$~\cite{howard:majorizable}. Majorizability plays an important role in models of higher-order calculi and in the proof mining program~\cite{kohlenbach:book}. Howard's majorizability relation extends the usual ordering $\leq$ on natural numbers to the one~$\maj{\rho}$ on functionals of arbitrary finite type $\rho$ in the same way as in Theorem~\ref{thm:LR}. Specifically $\maj{\rho}$ is defined inductively on $\rho$ as follows:
\[
\begin{aligned}
n \maj{\N} m & \eqdef n \leq m \\
f \maj{\sigma \to \tau} g & \eqdef \forall x^{\sigma} y^{\sigma} \left( x \maj{\sigma} y \to fx \maj{\tau} gy \right) \\
u\maj{\sigma\times\tau}v &\eqdef (\fst u\maj\sigma\fst v)\wedge(\snd u\maj\tau\snd v).
\end{aligned}
\]
We say $t$ is \emph{majorized} by $u$ if $t \maj{} u$, and call $u$ a \emph{majorant} of $t$. Howard shows that each closed term of~$\T$ is majorized by some closed term of~$\T$, which fits perfectly into our framework: Let us take $\J\N = \N$ and define $\eta : \N \to \N$ and $\kappa : (\N \to \N) \to \N \to \N$ by
\[
\eta(n) \eqdef n
\qquad\quad
\begin{aligned}
\kappa(g,0) & \eqdef g(0) \\
\kappa(g,n+1) & \eqdef \max \left( \kappa(g,n),g(n+1) \right).
\end{aligned}
\]
The $\max$ function can be defined in~$\T$ using~$\rec$ as shown in Section~\ref{sec:T}, and thus so can~$\kappa$. Intuitively $\kappa(g,n)$ is the maximum of the values $g0,g1,\ldots,gn$. Therefore, it satisfies the following property:

\begin{lem}
\label{lm:maj}
For any $g : \N \to \N$, we have $gm \leq \kappa (g,n)$ whenever $m \leq n$.
\end{lem}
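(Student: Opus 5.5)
We prove the statement by induction on $n$, keeping $g$ fixed and letting $m$ vary; that is, we show $\forall m\,(m \leq n \to gm \leq \kappa(g,n))$ by induction on $n$. The only facts about $\max$ we need follow from its defining equations in Section~\ref{sec:T}: $a \leq \max(a,b)$ and $b \leq \max(a,b)$ for all $a,b:\N$. Each of these is itself a routine induction, on $a$ and then case analysis on $b$, using $\max(0,n)=n$, $\max(m,0)=m$ and $\max(m+1,n+1)=\max(m,n)+1$. We establish these two inequalities first as an auxiliary step.

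For the base case $n=0$, the hypothesis $m \leq 0$ forces $m=0$. Then $gm = g(0) = \kappa(g,0)$ by the first defining equation of $\kappa$, so the inequality holds with equality.

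For the step, assume the claim for $n$ and let $m \leq n+1$. We split into the cases $m \leq n$ and $m = n+1$, which are decidable on $\N$.
\begin{itemize}
\item If $m \leq n$, the induction hypothesis gives $gm \leq \kappa(g,n)$. Since $\kappa(g,n) \leq \max(\kappa(g,n), g(n+1)) = \kappa(g,n+1)$, transitivity of $\leq$ gives $gm \leq \kappa(g,n+1)$.
\item If $m = n+1$, then $gm = g(n+1) \leq \max(\kappa(g,n), g(n+1)) = \kappa(g,n+1)$ by the second inequality for $\max$.
\end{itemize}

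No step is a real obstacle. The only mildly technical point is verifying the two $\max$ inequalities from the recursive definition of $\max$ rather than taking them as primitive. All the reasoning is elementary arithmetic, so the whole argument can be carried out in $\HAo$ as noted in the remark after Theorem~\ref{thm:LR}.
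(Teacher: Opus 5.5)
Your proof is correct and follows the paper's argument: induction on $n$, with the same base case and the same split of the step into $m=n+1$ and $m\leq n$. The only addition is that you derive the two inequalities $a\leq\max(a,b)$ and $b\leq\max(a,b)$ from the defining equations of $\max$, which the paper simply cites as ``by definition''.
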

\begin{proof}
By induction on $n$.
If $n=0$, we are done because $m$ has to be $0$. If $m \leq n+1$, we have two cases: (i)~If $m=n+1$, then $g(n+1) \leq \kappa(g,n+1)$ by definition. (ii)~If $m \leq n$, then $g(m) \leq \kappa(g,n) \leq \kappa(g,n+1)$ by the induction hypothesis and definition.
\end{proof}

\begin{cor}
Each closed term $t : \rho$ of $\T$ is majorized by its translation $t^\J$.
\end{cor}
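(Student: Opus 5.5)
The plan is to instantiate Theorem~\ref{thm:LR} with the nucleus $(\N,\eta,\kappa)$ just defined, taking ${\R{\N}}\eqdef{\maj{\N}}$, so that $n\R{\N}m$ means $n\leq m$. Since $\J\N=\N$, we have $\rho^\J=\rho$ for every type $\rho$. The clauses of Theorem~\ref{thm:LR} for arrow and product types then coincide literally with those defining Howard's $\maj{\rho}$. A routine induction on $\rho$ shows that ${\R{\rho}}$ is exactly ${\maj{\rho}}$. Once the two conditions in~(\ref{eq:LR}) are verified, the theorem gives $t\R{\rho}t^\J$ for every closed $t:\rho$, that is, $t\maj{\rho}t^\J$.

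First condition of~(\ref{eq:LR}): we need $n\leq\eta n$, which holds trivially since $\eta n=n$.

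Second condition of~(\ref{eq:LR}): suppose $f,g:\N\to\N$ satisfy $fn\leq gn$ for all $n$, and show $f\maj{\N\to\N}\kappa g$. Unfolding the arrow clause, we take $m\leq n$ and must prove $fm\leq\kappa(g,n)$. This follows by chaining $fm\leq gm$ with Lemma~\ref{lm:maj}, which gives $gm\leq\kappa(g,n)$ whenever $m\leq n$.

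The only point needing some care is the identification of $\R{\rho}$ with $\maj{\rho}$. It depends on $\rho^\J=\rho$ holding syntactically, so that both relations quantify over the same types, and on the base cases agreeing. Both facts are immediate here, so I expect no real obstacle. The substantive content sits in Lemma~\ref{lm:maj}, which is already proved: the monotone accumulation of maxima in $\kappa$ is exactly what makes the pointwise bound upgrade to a majorization at type $\N\to\N$.
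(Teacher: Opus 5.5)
Your proposal is correct and follows essentially the same route as the paper: instantiate Theorem~\ref{thm:LR} with $\R{\N}$ taken as $\leq$, check the first condition by reflexivity, and check the second by chaining $fm\leq gm$ with Lemma~\ref{lm:maj}. Your explicit remark that $\rho^\J=\rho$, so that $\R{\rho}$ coincides with $\maj{\rho}$, is left implicit in the paper but is exactly what justifies reading the conclusion as majorization.
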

\begin{proof}
We only need to check that the two conditions (\ref{eq:LR}) are fulfilled. The first one holds because the ordering $\leq$ is reflexive. For the second, let us assume $\forall n \left( fn \leq gn \right)$ and $n \leq m$. We have $gn \leq \kappa(g,m)$ by Lemma~\ref{lm:maj}, and thus $fn \leq \kappa(g,m)$ by the transitivity of $\leq$.
\end{proof}

We draw the reader's attention to this simple example also because the nucleus defined above does \emph{not} satisfy the monad laws: Here $\eta$ is the identity function on $\N$. If the left-identity law $\kappa(f,\eta x) = fx$ holds, then $\kappa$ has to be the identity function on $\Baire$, which is not the case.

\subsection{Lifting to higher-order functionals}
\label{sec:lift}

In the majorizability example, we extend a relation on natural numbers to arbitrary finite types and then show that the resulting logical relation holds for all terms of $\T$.
However, if one wants to prove a certain property~$P$ of functions $X \to \N$, the above syntactic method may not work directly, because the property~$P$ may not be captured by the inductively defined logical relation.
Our monadic translation can serve as a preliminary step to solve the problem by lifting natural numbers to functions $X \to \N$ so that the desired property~$P$ becomes the base case of the logical relation.

Fix a finite type $X$ of~$\T$. Consider the \emph{reader} nucleus $(\JN,\eta,\kappa)$ with $\JN = X \to \N$ and $\eta : \N \to X \to \N$ and $\kappa : (\N \to X \to \N) \to (X \to \N) \to X \to \N$ defined by
\[
\eta(n)\eqdef\lambda x^X.n
\qquad
\kappa(g,h)\eqdef\lambda x^X.g(hx,x).
\]
Thus $\eta(n)$ is the constant function with value~$n$. To evaluate $\kappa(g,h)$ at~$x$, we first use $h(x)$ to select a function from~$g$ and then evaluate that function at the same~$x$.

For each $x:X$, define a relation at the base type by
\[
n\E{x}{\N}h \eqdef n=h(x),
\]
extend it to function and product types, and then to contexts, by the clauses of Theorem~\ref{thm:LR}. This gives a logical relation $\E{x}{\rho}$. It satisfies the two conditions in~(\ref{eq:LR}). Indeed, $\eta(n)(x)=n$. For the second condition, suppose $f:\N\to\N$ and $g:\N\to X\to\N$ satisfy $f(m)=g(m,x)$ for every $m:\N$, and suppose $n=h(x)$. We have $\kappa(g,h)(x)=g(hx,x)=g(n,x)=f(n)$.
Hence the fundamental theorem applies to $\E{x}{\rho}$ for every $x:X$.

A term $\Omega:X^\J$, possibly with free variables, is a \emph{generic element} for this lifting if, for every assignment to its free variables,
\[
\forall x^X\bigl(x\E{x}{X}\Omega\bigr).
\]
Such an element represents an arbitrary input of type~$X$ through the translated type~$X^\J$~\cite{cj10,escardo:dialogue}.

\begin{cor}
\label{cor:lift-representation}
Let $\Gamma\vdash f:X\to\N$ and $\Gamma^\J\vdash\Omega:X^\J$.
If $\Omega$ is generic, then for every $x:X$ and assignments to $\Gamma$ and $\Gamma^\J$ such that $\Gamma\E{x}{}\Gamma^\J$, we have
\[
f(x)=(f^\J\Omega)(x).
\]
\end{cor}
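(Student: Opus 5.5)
The plan is to apply the fundamental theorem (Theorem~\ref{thm:LR}) to the logical relation $\E{x}{\rho}$ for the fixed point $x:X$. This is legitimate because we have already checked that $\E{x}{\N}$ satisfies both conditions in~(\ref{eq:LR}). For these it suffices that $\eta(n)(x)=n$ and that $\kappa(g,h)(x)=g(hx,x)$.

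\textbf{Step 1.} Fix $x:X$ and assignments to $\Gamma$ and $\Gamma^\J$ with $\Gamma\E{x}{}\Gamma^\J$. Since $\Gamma\vdash f:X\to\N$, Theorem~\ref{thm:LR} yields
\[
f\E{x}{X\to\N}f^\J .
\]
The relation at the arrow type unfolds, by the clause of Theorem~\ref{thm:LR}, to the following: for all $y:X$ and $a:X^\J$ with $y\E{x}{X}a$, we have $f(y)\E{x}{\N}f^\J(a)$.

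\textbf{Step 2.} Instantiate $y\eqdef x$ and $a\eqdef\Omega$. Genericity of $\Omega$ gives $x\E{x}{X}\Omega$ under the current assignment to the free variables. We therefore obtain $f(x)\E{x}{\N}f^\J\Omega$. By the definition of $\E{x}{\N}$, this says exactly
\[
f(x)=(f^\J\Omega)(x).
\]

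The only point needing care, which I expect to be the main, if mild, obstacle, is the bookkeeping of free variables. The theorem is applied with the fixed $x$ used both as the index of the relation and as the argument. The assignment to the free variables of $\Omega$ must be the one given, which lies inside $\Gamma^\J$. Genericity is quantified over all such assignments, so it holds in particular for this one. No induction beyond the fundamental theorem itself should be needed.
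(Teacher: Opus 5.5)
Your proof is correct and is essentially the paper's argument: the fundamental theorem for $\E{x}{}$ gives $f\E{x}{X\to\N}f^\J$, and applying the arrow clause to the genericity instance $x\E{x}{X}\Omega$ yields $f(x)\E{x}{\N}f^\J\Omega$, which is the required equality by the base clause. Your remark about instantiating genericity at the given assignment to $\Gamma^\J$ is the right bookkeeping; the paper leaves it implicit in ``fix $x$ and the related assignments.''
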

\begin{proof}
Fix $x$ and the related assignments. The fundamental theorem gives $f\E{x}{X\to\N}f^\J$. Applying the function clause to $x\E{x}{X}\Omega$ gives $f(x)\E{x}{\N}f^\J\Omega$, which is the displayed equality by the definition of the base relation.
\end{proof}

We first construct generic elements for $\N$ and $\N\to\N$, and then extend the construction to arbitrary finite types~\(X\).

When $X=\N$, we have $X^\J=\N\to\N$. Define the closed term
\[
\Omega_0\eqdef\lambda x^\N.x.
\]
For every $x:\N$, the condition $x\E{x}{\N}\Omega_0$ says exactly that $x=\Omega_0(x)$, so $\Omega_0$ is generic.

When $X=\N\to\N$, we have $X^\J=((\N\to\N)\to\N)\to(\N\to\N)\to\N$. Define the closed term
\[
\Omega_1\eqdef\lambda g^{\Baire\to\N}.\lambda\alpha^{\N\to\N}.\alpha(g\alpha).
\]
To check that $\Omega_1$ is generic, fix $\alpha:\N\to\N$. By the function clause, $\alpha\E{\alpha}{\N\to\N}\Omega_1$ holds if $\alpha(n)\E{\alpha}{\N}\Omega_1(g)$ whenever $n\E{\alpha}{\N}g$. The latter premise says $n=g(\alpha)$, and the conclusion says $\alpha(n)=\alpha(g\alpha)$, which follows immediately.

For an arbitrary finite type~$X$, we construct a generic element by defining two families of closed terms, simultaneously by induction on the type $\rho$:
\[
\mathsf{up}_\rho:(X\to\rho)\to\rho^\J
\qquad\text{and}\qquad
\mathsf{down}_\rho:\rho^\J\to X\to\rho.
\]
The equations for $\mathsf{up}$ are
\[
\begin{aligned}
\mathsf{up}_\N(f)
  &\eqdef f,\\
\mathsf{up}_{\sigma\to\tau}(f)
  &\eqdef \lambda u^{\sigma^\J}.\, \mathsf{up}_\tau\bigl(\lambda x^X.\, f(x)(\mathsf{down}_\sigma(u)(x))\bigr),\\
\mathsf{up}_{\sigma\times\tau}(f)
  &\eqdef \pair{\mathsf{up}_\sigma(\lambda x^X.\, \fst(fx))}{\mathsf{up}_\tau(\lambda x^X.\, \snd(fx))}.
\end{aligned}
\]
The equations for $\mathsf{down}$ are
\[
\begin{aligned}
\mathsf{down}_\N(h)(x)
  &\eqdef hx,\\
\mathsf{down}_{\sigma\to\tau}(h)(x)
  &\eqdef \lambda a^\sigma .\, \mathsf{down}_\tau\bigl(h(\mathsf{up}_\sigma(\lambda y^X.a))\bigr)(x),\\
\mathsf{down}_{\sigma\times\tau}(h)(x)
  &\eqdef \pair{\mathsf{down}_\sigma(\fst h)(x)}{\mathsf{down}_\tau(\snd h)(x)}.
\end{aligned}
\]
At function types, $\mathsf{up}_{\sigma \to \tau}$ uses $\mathsf{down}_\sigma$ to turn its argument $u:\sigma^\J$ into a function of type $X \to \sigma$. Conversely, $\mathsf{down}_{\sigma \to \tau}$ regards its argument $a:\sigma$ as the constant function $\lambda y^X.a : X \to \sigma$, then uses $\mathsf{up}_\sigma$ to obtain an argument of type $\sigma^\J$.
Lastly, we take
\[
\Omega_X\eqdef\mathsf{up}_X(\lambda x^X.x):X^\J.
\]

To verify the genericity of $\Omega_X$, we use the principle of \emph{function extensionality}: functions that agree at every argument are equal.
However, the genericity proofs for $\Omega_0$ and $\Omega_1$ above do not require this principle.

\begin{prop}
\label{prop:generic-lifting}
With function extensionality, $\Omega_X$ is a generic element for every finite type~$X$.
\end{prop}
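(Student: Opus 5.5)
We prove a stronger statement, from which genericity of $\Omega_X$ follows at once. By simultaneous induction on the type $\rho$, we establish for every $x:X$ the two claims
\[
\text{(U)}\quad \forall f^{X\to\rho}\bigl(f(x)\E{x}{\rho}\mathsf{up}_\rho(f)\bigr)
\qquad\text{and}\qquad
\text{(D)}\quad \forall a^\rho\, h^{\rho^\J}\bigl(a\E{x}{\rho}h\to a=\mathsf{down}_\rho(h)(x)\bigr).
\]
Instantiating (U) at $\rho=X$ with $f=\lambda x^X.x$ gives $x\E{x}{X}\mathsf{up}_X(\lambda x.x)=\Omega_X$ for every $x$, which is the claim. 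Since $\Omega_X$ is closed, no assignment to free variables needs to be considered.

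At the base type both claims are immediate. For (U), we need $f(x)=\mathsf{up}_\N(f)(x)=f(x)$. For (D), the relation $a\E{x}{\N}h$ is by definition $a=h(x)=\mathsf{down}_\N(h)(x)$.

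At product types both claims follow componentwise from the induction hypotheses. For (U), the first component of $\mathsf{up}_{\sigma\times\tau}(f)$ is $\mathsf{up}_\sigma(\lambda y.\fst(fy))$, and (U) at $\sigma$ gives $\fst(fx)\E{x}{\sigma}$ of it. The second component is handled in the same way. For (D), we obtain $\fst a=\mathsf{down}_\sigma(\fst h)(x)$ and $\snd a=\mathsf{down}_\tau(\snd h)(x)$. Surjective pairing, which we assume together with function extensionality (or which can be read off extensionally), then gives $a=\mathsf{down}_{\sigma\times\tau}(h)(x)$.

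At function types $\sigma\to\tau$ we argue as follows.
\begin{itemize}
\item For (U), take $b\E{x}{\sigma}u$. We must show $f(x)(b)\E{x}{\tau}\mathsf{up}_\tau(\lambda y.f(y)(\mathsf{down}_\sigma(u)(y)))$. By (U) at $\tau$, the right-hand side is related to $f(x)(\mathsf{down}_\sigma(u)(x))$. By (D) at $\sigma$, we have $b=\mathsf{down}_\sigma(u)(x)$. Rewriting with this equation gives the result. Only the point $x$ is involved here, so the rewrite is an ordinary equation between base-level terms.
\item For (D), suppose $a\E{x}{\sigma\to\tau}h$. We must show $a=\lambda c^\sigma.\mathsf{down}_\tau(h(\mathsf{up}_\sigma(\lambda y.c)))(x)$. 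By function extensionality it suffices to fix $c:\sigma$. Claim (U) at $\sigma$, applied to the constant function $\lambda y.c$, gives $c\E{x}{\sigma}\mathsf{up}_\sigma(\lambda y.c)$. The hypothesis then yields $a(c)\E{x}{\tau}h(\mathsf{up}_\sigma(\lambda y.c))$, and (D) at $\tau$ turns this into the required equation pointwise.
\end{itemize}

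The main obstacle is arranging the induction correctly rather than any single computation. The two claims must be proved simultaneously: (U) at an arrow type needs (D) at the domain, and (D) at an arrow type needs (U) at the domain. Moreover, (D) must be stated with equality in the conclusion at higher types, and this is exactly where function extensionality enters. I would also check that the relation $\E{x}{\rho}$ respects this extensional equality, so that the rewrite in the function case of (U) is legitimate. This holds by a routine induction on $\rho$ once equality is extensional, and it explains why the proofs for $\Omega_0$ and $\Omega_1$ did not need extensionality: there only base-type equations arise.
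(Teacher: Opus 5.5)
Your proof is correct and follows the paper's argument: the same simultaneous induction on $\rho$ proving that $f(x)\E{x}{\rho}\mathsf{up}_\rho(f)$ and that $a\E{x}{\rho}h$ forces $a=\mathsf{down}_\rho(h)(x)$, with function extensionality used only in the arrow case of the second claim, and genericity obtained by taking $f=\lambda x^X.x$ at $\rho=X$. One small correction: in the arrow case of (U), the equation $b=\mathsf{down}_\sigma(u)(x)$ lives at type $\sigma$, which may be a higher type, so the rewrite is not between base-level terms; it relies on that equation being a genuine substitutive equality, as you anticipate in your final paragraph (and your appeal to surjective pairing in the product case makes explicit a point the paper leaves implicit).
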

\begin{proof}
We prove simultaneously by induction on $\rho$ that, for every $x:X$,
\[
fx\E{x}{\rho}\mathsf{up}_\rho(f)
\qquad\text{and}\qquad
u\E{x}{\rho}h\ \Longrightarrow\ u=\mathsf{down}_\rho(h)(x).
\]
Both claims hold immediately for $\rho=\N$.
At product types they follow componentwise from the induction hypotheses.

Suppose $\rho=\sigma\to\tau$. For the first claim, let $u\E{x}{\sigma}h$.
The second induction hypothesis at~$\sigma$ gives $u=\mathsf{down}_\sigma(h)(x)$.
Applying the first induction hypothesis at~$\tau$ to the function $\lambda y^X.f(y)(\mathsf{down}_\sigma(h)(y))$ and substituting for~$u$ gives $f(x)(u)\E{x}{\tau}\mathsf{up}_{\sigma\to\tau}(f)(h)$.

For the second claim, suppose $u\E{x}{\sigma\to\tau}h$ and take any $a:\sigma$. The first induction hypothesis at~$\sigma$, applied to the constant function $\lambda y^X.a$, gives $a\E{x}{\sigma}\mathsf{up}_\sigma(\lambda y^X.a)$. Hence $u(a)\E{x}{\tau}h(\mathsf{up}_\sigma(\lambda y^X.a))$. The second induction hypothesis at~$\tau$ yields
\[
u(a)=\mathsf{down}_\tau\bigl(h(\mathsf{up}_\sigma(\lambda y^X.a))\bigr)(x)
     =\mathsf{down}_{\sigma\to\tau}(h)(x)(a).
\]
Because equality is extensional at function types, we have $u=\mathsf{down}_{\sigma\to\tau}(h)(x)$.

Finally, taking $\rho=X$ and $f=\lambda x^X.x$ in the first claim yields $x\E{x}{X}\Omega_X$ for every $x:X$, as required.
\end{proof}

Combining Corollary~\ref{cor:lift-representation} with Proposition~\ref{prop:generic-lifting} gives the following theorem.

\begin{thm}
\label{thm:lift-representation}
With function extensionality, for every finite type $X$ and every closed term $f:X\to\N$ of~$\T$,
\[
f=f^\J\Omega_X
\]
as functions $X\to\N$.
\end{thm}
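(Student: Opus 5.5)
The plan is to combine Corollary~\ref{cor:lift-representation} with Proposition~\ref{prop:generic-lifting}, specialised to closed terms, and then use function extensionality to turn the resulting pointwise equality into an equality of functions.

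First, I fix a finite type $X$ and a closed term $f:X\to\N$ of~$\T$. Since $f$ is closed, I take $\Gamma$ to be the empty context, so $\Gamma^\J$ is empty as well. The hypothesis $\Gamma\E{x}{}\Gamma^\J$ is then the empty conjunction, which is trivially true for every $x:X$, and there are no assignments to choose. Next, I note that $\Omega_X=\mathsf{up}_X(\lambda x^X.x)$ is a closed term of type $X^\J$, so it lives in the empty context $\Gamma^\J$. By Proposition~\ref{prop:generic-lifting}, which is where function extensionality enters, $\Omega_X$ is generic, that is, $x\E{x}{X}\Omega_X$ holds for every $x:X$. For closed $\Omega_X$ the quantification over assignments to free variables is vacuous.

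Corollary~\ref{cor:lift-representation} then gives $f(x)=(f^\J\Omega_X)(x)$ for every $x:X$. Both sides have type $X\to\N$: $f^\J$ has type $X^\J\to\J\N=X^\J\to(X\to\N)$, so $f^\J\Omega_X:X\to\N$. Applying function extensionality once more to this pointwise identity yields $f=f^\J\Omega_X$ as functions $X\to\N$.

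I do not expect a genuine obstacle here. The substantive work, namely the simultaneous induction showing that $\mathsf{up}$ and $\mathsf{down}$ respect $\E{x}{\rho}$, is already done in Proposition~\ref{prop:generic-lifting}. The only point that needs care is checking that the corollary applies with the empty context, and that the extensional reading of equality ``as functions'' is exactly what function extensionality provides.
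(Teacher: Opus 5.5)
Your proposal is correct and matches the paper's argument, which obtains the theorem by combining Corollary~\ref{cor:lift-representation} (here with the empty context) and Proposition~\ref{prop:generic-lifting}. Your final appeal to function extensionality just makes explicit the passage from pointwise equality to equality ``as functions''.
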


The lifting also provides a way to prove that every $\T$-definable function $X\to\N$ has a given property~$P$. Define predicates $Q_\rho \subseteq \rho^\J$ on the translated types $\rho^\J$ by
\[
\begin{aligned}
Q_\N(h) &\eqdef P(h),\\
Q_{\sigma\to\tau}(g) &\eqdef \forall a^{\sigma^\J}\bigl(Q_\sigma(a)\to Q_\tau(ga)\bigr),\\
Q_{\sigma\times\tau}(w) &\eqdef Q_\sigma(\fst w)\wedge Q_\tau(\snd w).
\end{aligned}
\]
To establish $Q_\rho(t^\J)$ for every closed term $t:\rho$, it suffices to check that $P$ holds for the constant functions $\eta(n)$ and is preserved by $\kappa$: whenever $P(g(n))$ holds for every $n:\N$ and $P(h)$ holds, then $P(\kappa(g,h))$ holds. Indeed, these are the hypotheses of Theorem~\ref{thm:LR} for the base relation $n\R{\N}h$ defined by $P(h)$, independently of~$n$.

If we also establish $Q_X(\Omega_X)$, then $Q_\N(f^\J\Omega_X)$, that is, $P(f^\J\Omega_X)$, holds for every closed $f:X\to\N$. Theorem~\ref{thm:lift-representation} transfers this property to~$f$.

This lifting approach was used in~\cite{xu:tcont} with $X=\Baire$ to prove continuity of every $\T$-definable function $\Baire\to\N$.
There, a continuity predicate is established for translated terms, and via a generic element the result is transferred back to the original functions.
In the continuity applications below, we enrich the lifting nucleus with witness components so that the translation produces moduli as well.

\subsection{Continuity}
\label{sec:cont}

The next applications of our monadic translation are to recover the well-known results that every $\T$-definable function $\Baire \to \N$ is pointwise continuous and its restriction to any compact subspace is uniformly continuous~\cite{beeson:foundations}.

For this, we enrich the lifting nucleus (Section~\ref{sec:lift}) so that moduli of (uniform) continuity are obtained directly from the translation. The resulting moduli are proved correct using the fundamental theorem of logical relations and suitable generic elements.

Some other approaches establish continuity through the Dialectica interpretation and majorizability~\cite{kohlenbach:majorization}, forcing with generic elements~\cite{cj10}, dialogue trees~\cite{escardo:dialogue}, and sheaf models~\cite{escardo:xu:kk,fourman:notions,vdh:moerdijk:sheaf}. A call-by-value monadic translation also yields continuity~\cite{powell:ccht}; we compare it with ours in Section~\ref{sec:kuroda}.

\subsubsection{Pointwise continuity}
\label{sec:pcont}
Recall that a function $M: \Baire \to \N$ is a \emph{modulus of continuity} of $f: \Baire \to \N$ if
\[
\forall \alpha^{\N \to \N} \beta^{\N \to \N} \left( \alpha =_{M\alpha} \beta \to f\alpha = f\beta \right)
\]
where $\alpha =_m \beta$ stands for $\forall i\mathord{<}m \left(\alpha i = \beta i\right)$. Our goal is to find a suitable nucleus~$\J$ so that we can obtain such a functional~$M$ from the $\J$-translation of~$f$ and then verify its correctness using the fundamental theorem of logical relations.

Let $\JN = (\Baire \to \N) \times (\Baire \to \N)$. For $w:\JN$ we write $\rV_w$ to denote its first component and $\rM_w$ the second, due to the intuition that $\rM_w$ is a modulus of continuity of the value component $\rV_w$. Then we define $\eta: \N \to \JN$ by
\[
\eta(n) \eqdef \pair{\lambda \alpha. n}{\lambda \alpha. 0}
\]
and $\kappa: (\N \to \JN) \to \JN \to \JN$ by
\[
\kappa(g,w) \eqdef \pair{\lambda \alpha. \rV_{g(\rV_w (\alpha))}(\alpha)}{\lambda \alpha. \max(\rM_{g(\rV_w (\alpha))}(\alpha),\rM_w(\alpha))}.
\]
The value components follow the lifting construction of Section~\ref{sec:lift}, while the modulus components record continuity information. The term $\eta(n)$ pairs the constant function $\lambda\alpha.n$ with the constantly zero modulus, since its input is never accessed. For $\kappa(g,w)$, two bounds are needed at~$\alpha$: the modulus of $w$ ensures that the same function $g(\rV_w(\alpha))$ is selected, and the modulus of that function ensures that its value is unchanged. Taking their maximum meets both requirements.

We work with a logical relation ${\E{\alpha}{\rho}} \subseteq \rho \times \rho^\J$ which is parametrized by $\alpha : \N \to \N$. Specifically, its base case ${\E{\alpha}{\N}} \subseteq \N \times \JN$ is defined by
\[
n \E{\alpha}{\N} w \eqdef n = \rV_w(\alpha) \wedge \forall \beta \left( \alpha =_{\rM_w(\alpha)} \beta \to \rV_w(\alpha) = \rV_w(\beta) \right).
\]
The first component of $n \E{\alpha}{\N} w$ states that the value of~$w$ at~$\alpha$ is~$n$, while the second explains exactly the intuition of the type $\JN$, namely that $\rM_w(\alpha)$ is a modulus of continuity of~$\rV_w$ at~$\alpha$.
The conditions in~(\ref{eq:LR}) are immediate for~$\eta$. For~$\kappa$, suppose $f(m)\E{\alpha}{\N}g(m)$ for every $m:\N$ and $n\E{\alpha}{\N}w$. Then $n=\rV_w(\alpha)$ and $f(n)=\rV_{g(n)}(\alpha)=\rV_{\kappa(g,w)}(\alpha)$. If $\alpha=_{\rM_{\kappa(g,w)}(\alpha)}\beta$, the modulus of~$w$ gives $\rV_w(\beta)=n$, and the modulus of~$g(n)$ gives $\rV_{g(n)}(\beta)=\rV_{g(n)}(\alpha)$. Hence $f(n)\E{\alpha}{\N}\kappa(g,w)$, as required.
By Theorem~\ref{thm:LR}, we have $t \E{\alpha}{\rho} t^\J$ for any $\alpha : \N \to \N$ and for any closed term $t:\rho$ of~$\T$. In particular, we have $f \E{\alpha}{\Baire \to \N} f^\J$ for every closed term $f: \Baire \to \N$ of $\T$.

The last step is to construct the generic element $\Omega: \JN \to \JN$ such that $\alpha \E{\alpha}{\N \to \N} \Omega$ for all $\alpha:\N \to \N$. Once we unfold $\alpha \E{\alpha}{\N \to \N} \Omega$, we can see that, for any $w:\JN$, the value of $\Omega(w)$ has to be $\lambda \alpha. \alpha(\rV_w(\alpha))$. Then we also need to construct its modulus of continuity. There are two required bounds at $\alpha$: one is $\rV_w(\alpha)+1$ because the modulus of continuity of $\lambda \alpha.\alpha n$ at $\alpha$ is $n+1$, and the other is $\rM_w(\alpha)$. We just take the greater one and then end up with the following definition:
\[
\Omega(w) \eqdef \pair{\lambda \alpha. \alpha(\rV_w(\alpha))}{\lambda \alpha. \max(\rV_w(\alpha)+1,\rM_w(\alpha))}.
\]
One may have noticed that the above is highly similar to the definition of~$\kappa$. Indeed, we have $\Omega = \kappa(\lambda n.\pair{\lambda \alpha.\alpha n}{\lambda \alpha. n+1})$.

\begin{thm}
\label{thm:cont}
Every closed term $f: \Baire \to \N$ of~$\T$ has a modulus of continuity given by the term $\rM_{f^\J(\Omega)}$.
\end{thm}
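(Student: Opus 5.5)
The proof combines the fundamental theorem, already established in the text above for the relation $\E{\alpha}{\rho}$, with the genericity of $\Omega$. Since the conditions (\ref{eq:LR}) have been checked for this nucleus, Theorem~\ref{thm:LR} gives $f \E{\alpha}{\Baire\to\N} f^\J$ for every closed $f:\Baire\to\N$ and every $\alpha$. If we also verify $\alpha \E{\alpha}{\N\to\N} \Omega$, the function clause yields $f\alpha \E{\alpha}{\N} f^\J(\Omega)$. Unfolding the base relation, this says two things. First, $f\alpha = \rV_{f^\J(\Omega)}(\alpha)$. Second, for every $\beta$ with $\alpha =_{\rM_{f^\J(\Omega)}(\alpha)} \beta$ we have $\rV_{f^\J(\Omega)}(\alpha) = \rV_{f^\J(\Omega)}(\beta)$.

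The key step is to pass from this continuity statement about $\rV_{f^\J(\Omega)}$ to one about $f$ itself. Applying the first conjunct at $\beta$ instead of $\alpha$ gives $f\beta = \rV_{f^\J(\Omega)}(\beta)$, because the relation holds for every parameter. Chaining the equalities, $\alpha =_{\rM_{f^\J(\Omega)}(\alpha)}\beta$ implies
\[
f\alpha = \rV_{f^\J(\Omega)}(\alpha) = \rV_{f^\J(\Omega)}(\beta) = f\beta .
\]
Hence $\rM_{f^\J(\Omega)}$ is a modulus of continuity of $f$. This is the analogue of Theorem~\ref{thm:lift-representation}, but here it needs no function extensionality, since only the value at the base type is used.

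It remains to check genericity: $\alpha \E{\alpha}{\N\to\N}\Omega$ for every $\alpha$.

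1. Assume $n \E{\alpha}{\N} w$, that is, $n=\rV_w(\alpha)$ and $\rM_w(\alpha)$ is a modulus of $\rV_w$ at $\alpha$. We must show $\alpha(n) \E{\alpha}{\N} \Omega(w)$.
2. The value part is immediate: $\rV_{\Omega(w)}(\alpha) = \alpha(\rV_w(\alpha)) = \alpha(n)$.
3. For the modulus part, suppose $\alpha =_{\max(\rV_w(\alpha)+1,\rM_w(\alpha))}\beta$.
4. Since $\rM_w(\alpha)$ is at most this maximum, the modulus of $w$ gives $\rV_w(\beta)=\rV_w(\alpha)=n$.
5. Since $n < n+1$ is also within the bound, $\alpha(n)=\beta(n)$.
6. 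Therefore $\rV_{\Omega(w)}(\beta)=\beta(\rV_w(\beta))=\beta(n)=\alpha(n)$, as required.

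Alternatively, we could use $\Omega = \kappa(\lambda n.\pair{\lambda\alpha.\alpha n}{\lambda\alpha.n+1})$ together with the verified $\kappa$-condition. That route needs only $\alpha(m) \E{\alpha}{\N} \pair{\lambda\beta.\beta m}{\lambda\beta.m+1}$ for each $m$, which is the same one-line check as step 5.

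The only mild obstacle is the bookkeeping of the parameter. The modulus statement quantifies over $\beta$, but the relation is indexed by $\alpha$, so one must remember to instantiate the fundamental theorem at $\beta$ as well to identify $f\beta$ with the value component. Everything else is direct unfolding of definitions.
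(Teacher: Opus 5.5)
Your proof is correct and follows the paper's argument. The fundamental theorem together with $\alpha \E{\alpha}{\N\to\N}\Omega$ gives $f\alpha \E{\alpha}{\N} f^\J(\Omega)$ for every $\alpha$, so $f$ agrees pointwise with $\rV_{f^\J(\Omega)}$ and inherits its modulus $\rM_{f^\J(\Omega)}$. You also verify the genericity of $\Omega$ explicitly, which the paper only motivates in the text before the theorem, and that verification is sound.
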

\begin{proof}
Because $f \E{\alpha}{\Baire \to \N} f^\J$ and $\alpha \E{\alpha}{\N \to \N} \Omega$, we have $f\alpha \E{\alpha}{\N} f^\J(\Omega)$ for any $\alpha: \N \to \N$, which implies (i)~$f=\rV_{f^\J(\Omega)}$ up to pointwise equality, and (ii)~$\rM_{f^\J(\Omega)}$ is a modulus of continuity of $\rV_{f^\J(\Omega)}$. Therefore, $\rM_{f^\J(\Omega)}$ is also a modulus of continuity of~$f$.
\end{proof}

\begin{rem}
\label{rm:cont}
The above development can be viewed as a syntactic (and simplified) version of Escard\'o's approach via dialogue trees~\cite{escardo:dialogue}. The algorithms to construct moduli of continuity in these two methods are exactly the same. In particular, our monadic translation can be instantiated to recover the internal dialogue-tree result of Escard\'o and coauthors~\cite{escardo:internal} as shown in Section~\ref{sec:dialogue}. On the other hand, although Powell also uses a monadic translation~\cite{powell:ccht}, his algorithm is different because his translation follows a call-by-value strategy. We will look into this in more detail in Section~\ref{sec:kuroda}.
\end{rem}

\subsubsection{Uniform continuity}
\label{sec:ucont}

For each closed term $f : \Baire \to \N$ of $\T$, our objective is to construct a \emph{modulus of uniform continuity} $M: \Baire \to \N$, \ie
\[
\forall \delta^{\N \to \N} \alpha^{\N \to \N} \beta^{\N \to \N} \left( \alpha \leq^1 \delta \wedge \beta \leq^1 \delta \wedge \alpha =_{M\delta} \beta \to f\alpha =f\beta \right)
\]
where $\alpha \leq^1 \beta$ stands for $\forall i (\alpha i \leq \beta i)$. The value $M\delta$ is called a modulus of uniform continuity of~$f$ on $\{ \alpha : \N \to \N \mid \alpha \leq^1 \delta \}$. The following fact about uniform continuity plays an important role in the construction:

\begin{lem}
\label{lm:mi}
If $f: \Baire \to \N$ is uniformly continuous on $\{ \alpha : \N \to \N \mid \alpha \leq^1 \delta \}$ with a modulus~$m$, then it has a maximum value on $\{ \alpha : \N \to \N \mid \alpha \leq^1 \delta \}$.
\end{lem}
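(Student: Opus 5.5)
The plan is to reduce the supremum over the compact set $\{\alpha \mid \alpha\leq^1\delta\}$ to a maximum over finitely many representative sequences, determined by their first $m$ values. For each finite sequence $s=(s_0,\ldots,s_{m-1})$ with $s_i\leq\delta(i)$ for all $i<m$, let $\hat s$ be the sequence that agrees with $s$ below $m$ and with $\delta$ from $m$ onwards:
\[
\hat s(i)=s_i \text{ for } i<m, \qquad \hat s(i)=\delta(i) \text{ for } i\geq m.
\]
Then $\hat s\leq^1\delta$. Every $\alpha\leq^1\delta$ satisfies $\alpha=_m\hat s$ for $s=(\alpha 0,\ldots,\alpha(m-1))$. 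Since both $\alpha$ and $\hat s$ lie below $\delta$, uniform continuity with modulus $m$ gives $f\alpha=f\hat s$. Hence the range of $f$ on the compact set is exactly $\{f\hat s \mid s\}$, a set indexed by finitely many $s$, namely $\prod_{i<m}(\delta(i)+1)$ of them.

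The remaining step is to show that this finite set has a maximum, attained by some $\hat s$. I would do this by induction on $m$, or more concretely by a finite iterated maximum. Define
\[
\mu_k(t)\eqdef \max_{j\leq \delta(k)} \mu_{k+1}(t\ast j)
\]
for prefixes $t$ of length $k<m$, with base case $\mu_m(t)\eqdef f\hat t$. The outer maximum can be computed with the function $\kappa$ of Section~\ref{sec:maj}, or simply with $\max$. Then Lemma~\ref{lm:maj}-style reasoning shows $f\hat s\leq\mu_0(\langle\rangle)$ for every admissible $s$. A straightforward induction also shows that the value $\mu_0(\langle\rangle)$ is attained: at each level the maximum over $j\leq\delta(k)$ is attained at some $j$, so we can pick a path $s$ realizing it. This yields an $\hat s\leq^1\delta$ with $f\hat s=\mu_0(\langle\rangle)\geq f\alpha$ for all $\alpha\leq^1\delta$.

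No step is deep. The only point needing care is to keep the argument constructive and within the intended metatheory. Finite maxima over bounded ranges are attained by decidable search, so this should be unproblematic. The definition of $\hat s$ ensures the representatives stay inside the subspace, which is why we pad with $\delta$ rather than with $0$; padding with $0$ would also work here, since $0\leq\delta(i)$ for every $i$.
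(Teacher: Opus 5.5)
Your proposal is correct and is essentially the paper's argument: unfolding the paper's recursion $\Theta(0,f,\delta)=f\delta$, $\Theta(m+1,f,\delta)=\Phi(\lambda i.\Theta(m,\lambda\alpha.f(i*\alpha),\delta\circ\suc),\delta 0)$ gives exactly your iterated maximum $\mu_0(\langle\rangle)$ of $f$ over the $\delta$-padded representatives $\hat s$. The only difference is presentational: the paper shifts $f$ and $\delta$ rather than accumulating prefixes. You should also state that this maximum is given uniformly in $m,f,\delta$ by a term of $\T$, since that is what the uniform-continuity nucleus actually uses.
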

\begin{proof}
We compute the maximum value $\Theta(m,f,\delta)$ by induction on the modulus~$m$:
\[
\begin{aligned}
\Theta(0,f,\delta) & \eqdef f\delta \\
\Theta(m+1,f,\delta) & \eqdef \Phi \left( \lambda i. \Theta \left(m , \lambda \alpha .f(i*\alpha), \delta \circ \suc \right), \delta 0 \right)
\end{aligned}
\]
where $i*\alpha$ is an infinite sequence with head $i$ and tail $\alpha$, and $\Phi : \Baire \to \N \to \N$ is defined by
\[
\begin{aligned}
\Phi(\alpha,0) & \eqdef \alpha 0 \\
\Phi(\alpha,n+1) & \eqdef \max \left( \Phi(\alpha,n), \alpha(n+1) \right)
\end{aligned}
\]
\ie~$\Phi(\alpha,n)$ is the greatest $\alpha i$ for $i\leq n$ (note that $\Phi$ is the $\kappa$ of the `majorizability' nucleus introduced in Section~\ref{sec:maj}). In the base case of $\Theta$, the modulus is 0 and thus $f$ is constant with the value $f\delta$. To compute $\Theta(m+1,f,\delta)$, by the induction hypothesis we have for each~$i\leq\delta(0)$ the maximum value $\Theta\left(m , \lambda \alpha .f(i*\alpha), \delta \circ \suc \right)$ of the function $\lambda \alpha .f(i*\alpha)$ with a modulus~$m$ on the inputs bounded by $\delta \circ \suc$. Because the inputs of~$f$ are bounded by $\delta$, the greatest $\Theta\left(m , \lambda \alpha .f(i*\alpha), \delta \circ \suc \right)$ for $i\leq\delta 0$ is the maximum value of~$f$, and we use $\Phi$ to find it. Note that both $\Phi$ and $\Theta$ can be defined in $\T$ using $\rec$.
\end{proof}

We are now ready to construct the nucleus. Let $\JN = (\Baire \to \N) \times (\Baire \to \N)$. The idea is exactly the same as in the previous treatment of pointwise continuity: For any $w:\JN$, its second component $\rM_w$ is (expected to be) a modulus of uniform continuity of the first component $\rV_w$. Then we define $\eta: \N \to \JN$ by
\[
\eta(n) \eqdef \pair{\lambda \alpha. n}{\lambda \alpha. 0}
\]
and $\kappa: (\N \to \JN) \to \JN \to \JN$ by
\[
\kappa(g,w) \eqdef \pair{\lambda \alpha. \rV_{g(\rV_w(\alpha))}(\alpha)}{\lambda \delta. \max(\Phi(\lambda i. \rM_{gi}(\delta), \Theta(\rM_w(\delta),\rV_w,\delta)), \rM_w(\delta))}
\]
where $\Phi$ and $\Theta$ are defined in the proof of Lemma~\ref{lm:mi}. Specifically, we construct a modulus of uniform continuity for the value of $\kappa(g,w)$ as follows: Given~$\delta : \N \to \N$, we have two required bounds given by the moduli of $g$ and $w$ at $\delta$ and thus we choose the larger one. Actually the only complication comes from the calculation of the modulus given by~$g$. For each $i:\N$ we have a modulus $\rM_{gi}(\delta)$. In the value of $\kappa(g,w)$, we apply $g$ to $\rV_w(\alpha)$ for input~$\alpha$. Because $\rV_w$ has a maximum value computed using $\Theta$, we only need to find the greatest modulus $\rM_{gi}(\delta)$ for $i$ not greater than the maximum value of $\rV_w$, and we use~$\Phi$ for this purpose.

Given $\delta: \N \to \N$, the base case ${\E{\delta}{\N}} \subseteq \N \times \JN$ of the logical relation is defined by
\[
n \E{\delta}{\N} w \eqdef n = \rV_w(\delta) \wedge \forall \alpha, \beta \left( \alpha \leq^1 \delta \wedge \beta \leq^1 \delta \wedge \alpha =_{\rM_w(\delta)} \beta \to \rV_w(\alpha) = \rV_w(\beta) \right).
\]
In words, $n \E{\delta}{\N} w$ means that $n$ is the value of $w$ at $\delta$ and that $\rM_w(\delta)$ is a modulus of uniform continuity of $\rV_w$ on $\{ \alpha : \N \to \N \mid \alpha \leq^1 \delta \}$. It is routine to check that both~$\eta$ and~$\kappa$ preserve the logical relation in the sense of (\ref{eq:LR}). By Theorem~\ref{thm:LR} we have $t \E{\delta}{\rho} t^\J$ for any $\delta : \N \to \N$ and for any closed term $t:\rho$ of~$\T$. Moreover, the generic element $\Omega: \JN \to \JN$ defined by
\[
\Omega \eqdef \kappa(\lambda n. \pair{\lambda \alpha. \alpha n}{\lambda \alpha. n+1})
\]
also preserves the logical relation in the sense of $\delta \E{\delta}{\N \to \N} \Omega$ for all $\delta:\N \to \N$.

With a proof similar to that of Theorem~\ref{thm:cont}, we get the following result:

\begin{thm}
Every closed term $f: \Baire \to \N$ of~$\T$ has a modulus of uniform continuity given by the term $\rM_{f^\J(\Omega)}$.
\end{thm}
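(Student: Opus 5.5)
The plan is to mirror the proof of Theorem~\ref{thm:cont}, but with the uniform relation $\E{\delta}{}$, and to check explicitly the two facts the paper calls routine: that $\eta$ and $\kappa$ satisfy (\ref{eq:LR}), and that $\delta\E{\delta}{\N\to\N}\Omega$. Given these, Theorem~\ref{thm:LR} yields $f\E{\delta}{\Baire\to\N}f^\J$ for every $\delta$. Combined with genericity this gives $f\delta\E{\delta}{\N}f^\J(\Omega)$. Hence $\rM_{f^\J(\Omega)}(\delta)$ is a modulus of uniform continuity of $\rV_{f^\J(\Omega)}$ on $\{\alpha\mid\alpha\leq^1\delta\}$, and $f\delta=\rV_{f^\J(\Omega)}(\delta)$.

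\textbf{Transferring from $\rV_{f^\J(\Omega)}$ to $f$.} Here there is a subtlety: the base relation only gives the value equation at the parameter $\delta$, not at arbitrary $\alpha\leq^1\delta$. I handle this by instantiating the fundamental theorem at each $\alpha$ separately, which gives $f\alpha=\rV_{f^\J(\Omega)}(\alpha)$ for \emph{every} $\alpha$. This step uses only the value clause, exactly as in Theorem~\ref{thm:cont}. Then for $\alpha,\beta\leq^1\delta$ with $\alpha=_{\rM(\delta)}\beta$ we get $f\alpha=\rV(\alpha)=\rV(\beta)=f\beta$, where $\rM(\delta)$ comes from the instance at $\delta$ and the two value equations come from the instances at $\alpha$ and $\beta$.

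\textbf{The conditions (\ref{eq:LR}).} For $\eta$, the modulus $0$ trivially works for a constant function. For $\kappa$, assume $f(m)\E{\delta}{\N}g(m)$ for all $m$ and $n\E{\delta}{\N}w$. The value equation is as in the pointwise case. For the modulus, let $\alpha,\beta\leq^1\delta$ agree below $\max(\Phi(\lambda i.\rM_{gi}(\delta),\Theta(\rM_w(\delta),\rV_w,\delta)),\rM_w(\delta))$.
\begin{enumerate}
\item Since they agree below $\rM_w(\delta)$, we get $\rV_w(\alpha)=\rV_w(\beta)=:k$.
\item The key lemma, sharpening Lemma~\ref{lm:mi}, is that $\Theta(m,h,\delta)$ is an upper bound: $h\alpha\leq\Theta(m,h,\delta)$ for all $\alpha\leq^1\delta$ whenever $m$ is a uniform modulus for $h$ on that set. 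This gives $k\leq\Theta(\rM_w(\delta),\rV_w,\delta)$.
\item Lemma~\ref{lm:maj} then gives $\rM_{gk}(\delta)\leq\Phi(\lambda i.\rM_{gi}(\delta),\Theta(\cdots))$.
\item So $\alpha,\beta$ agree below $\rM_{gk}(\delta)$. Using the hypothesis $f(k)\E{\delta}{\N}g(k)$, we conclude $\rV_{gk}(\alpha)=\rV_{gk}(\beta)$, which is the required equality of values of $\kappa(g,w)$.
\end{enumerate}

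\textbf{Genericity.} Since $\Omega=\kappa(\lambda n.\pair{\lambda\alpha.\alpha n}{\lambda\alpha.n+1})$, the relation $\delta\E{\delta}{\N\to\N}\Omega$ follows from the $\kappa$ condition just proved. It suffices that $\delta(n)\E{\delta}{\N}\pair{\lambda\alpha.\alpha n}{\lambda\alpha.n+1}$ for each $n$, which is clear: agreement below $n+1$ forces $\alpha n=\beta n$.

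\textbf{Main obstacle.} The hardest step is the upper-bound property of $\Theta$. I would prove it by induction on $m$, generalizing over $h$ and $\delta$.
\begin{itemize}
\item Base case $m=0$: every $\alpha\leq^1\delta$ agrees with $\delta$ below $0$, so $h\alpha=h\delta$.
\item Step case: write $\alpha=i*\alpha'$ with $i\leq\delta0$ and $\alpha'\leq^1\delta\circ\suc$. The function $\lambda\beta.h(i*\beta)$ has modulus $m$ on $\{\beta\mid\beta\leq^1\delta\circ\suc\}$, because agreement below $m$ of tails gives agreement below $m+1$. The induction hypothesis then applies to $\lambda\beta.h(i*\beta)$, and Lemma~\ref{lm:maj} for $\Phi$ at $i\leq\delta 0$ finishes the step.
\end{itemize}
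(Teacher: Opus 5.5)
Your proposal is correct and is essentially the paper's argument. The paper proves this theorem ``similarly to Theorem~\ref{thm:cont}'' (the value equation comes from the relation at every parameter, the modulus from the relation at $\delta$) and calls the checks of (\ref{eq:LR}) and of $\delta\E{\delta}{\N\to\N}\Omega$ routine; you have supplied those checks, and your upper-bound property of $\Theta$ is exactly the content of the paper's proof of Lemma~\ref{lm:mi}. One small remark: replacing $\alpha$ by $\alpha 0*(\alpha\circ\suc)$ in the step case needs no function extensionality, because the two sequences agree everywhere and are both bounded by $\delta$, so the modulus hypothesis itself gives $h\alpha=h(\alpha 0*(\alpha\circ\suc))$.
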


\subsection{Internal dialogue trees}
\label{sec:dialogue}

Here we use the translation to construct, for a $\T$-definable functional $\Baire\to\N$, a term of~$\T$ encoding a dialogue tree that computes the functional. The construction uses a Church-encoded dialogue-tree nucleus and a corresponding generic element. Compared with the internal effectful forcing construction of Escard\'o and coauthors~\cite{escardo:internal}, this gives a simpler route to a System~$\T$-definable dialogue tree using the translation and fundamental theorem already developed above.

We first recall Escard\'o's dialogue trees~\cite{escardo:dialogue}, instantiated for computations of functionals $\Baire \to \N$. Let $\mathcal D$ be the resulting inductive type, with leaves $\mathsf{leaf}_\mathcal{D}(n)$ for $n:\N$ and query nodes $\mathsf{query}_\mathcal{D}(i,g)$ for $i:\N$ and $g:\N\to\mathcal D$. A query node asks the input sequence for its value at~$i$ and follows the corresponding branch. Its evaluation at $\alpha: \N \to \N$ is defined by
\[
\begin{aligned}
\mathsf{eval}_\mathcal{D}(\mathsf{leaf}_\mathcal{D}(n),\alpha)&\eqdef n,\\
\mathsf{eval}_\mathcal{D}(\mathsf{query}_\mathcal{D}(i,g),\alpha)&\eqdef\mathsf{eval}_\mathcal{D}(g(\alpha i),\alpha).
\end{aligned}
\]
The inductive trees motivate the Church encoding below; the subsequent correctness proof works directly with that encoding rather than the inductive trees.

To encode the trees in~$\T$, take the following Church type, where $\T_2\eqdef(\N\to\N)\to\N$ is the finite type~2 of~$\T$:
\[
\mathsf{D} \eqdef (\N\to\T_2)\to((\N\to\T_2)\to\N\to\T_2)\to\T_2.
\]
A term $d:\mathsf{D}$ takes a \emph{leaf algebra} $e:\N\to\T_2$ and a \emph{query algebra} $q:(\N\to\T_2)\to\N\to\T_2$.
To evaluate it on an input sequence~$\alpha$, a leaf labelled~$n$ should return~$n$, while a query at~$i$ should read $\alpha i$ and continue along the corresponding branch, just as in the evaluation of inductive trees.
Thus we take $e_0:\N\to\T_2$ and $q_0:(\N\to\T_2)\to\N\to\T_2$ defined by
\[
e_0(n) \eqdef \lambda \alpha^{\N \to \N}.\, n
\qquad
q_0(g,i) \eqdef \lambda \alpha^{\N \to \N}.\, g(\alpha i)(\alpha),
\]
and define the internal evaluator $\mathsf{eval}: \mathsf{D} \to \Baire \to \N$ by
\[
\mathsf{eval}(d)(\alpha)\eqdef d(e_0,q_0)(\alpha).
\]
Note that this evaluator is a term of~$\T$.
In a metatheory with inductive types, one can verify that the two evaluators agree on Church encodings of inductive dialogue trees.

For the nucleus, take $\JN = \mathsf{D}$ and define terms $\eta:\N\to\mathsf{D}$ and $\kappa:(\N\to\mathsf{D})\to\mathsf{D}\to\mathsf{D}$ by
\[
\begin{aligned}
\eta(n) & \eqdef \lambda e^{\N \to \T_2} q^{(\N\to\T_2)\to\N\to\T_2} .\, e(n) \\
\kappa(g,d) & \eqdef \lambda e^{\N \to \T_2} q^{(\N\to\T_2)\to\N\to\T_2} .\, d(\lambda n.g(n)(e,q),q).
\end{aligned}
\]
Thus $\eta$ encodes a leaf and $\kappa$ replaces each leaf of a tree by a new tree.
To define the generic element, we additionally define a term $\beta:(\N\to\mathsf{D})\to\N\to\mathsf{D}$ that encodes a query by
\[
\beta(g,i) \eqdef \lambda e^{\N \to \T_2} q^{(\N\to\T_2)\to\N\to\T_2} .\, q(\lambda n.g(n)(e,q),i).
\]
Then the generic element $\Omega: \mathsf{D} \to \mathsf{D}$ is defined by
\[
\Omega\eqdef\kappa(\lambda n.\beta(\eta,n)).
\]
For each $\alpha: \N \to \N$, define the base case of a logical relation by
\[
n\E{\alpha}{\N}d
\eqdef
\forall e^{\N\to\T_2}\;d(e,q_0)(\alpha)=e(n)(\alpha),
\]
and extend it to the other types as in Theorem~\ref{thm:LR}. This relation says that, when the oracle is~$\alpha$, the Church value~$d$ behaves like a leaf labelled~$n$: its result is $e(n)(\alpha)$ for every choice of leaf algebra~$e$. In particular, taking $e=e_0$ gives $n=\mathsf{eval}(d)(\alpha)$.

The conditions~(\ref{eq:LR}) follow directly from the defining equations. The one for~$\eta$ is immediate. For~$\kappa$, suppose $f(m)\E{\alpha}{\N}g(m)$ for every $m:\N$ and $n\E{\alpha}{\N}d$. For any leaf algebra~$e$, let $e' = \lambda k. g(k)(e,q_0): \N \to \T_2$ and thus
\[
\begin{aligned}
\kappa(g,d)(e,q_0)(\alpha)
  &=d(\lambda k. g(k)(e,q_0),q_0)(\alpha) \\
  &=d(e',q_0)(\alpha)
   =e'(n)(\alpha)
  && (\text{by } n\E{\alpha}{\N}d \text{ with } e') \\
  &=g(n)(e,q_0)(\alpha)
   =e(f(n))(\alpha)
  && (\text{by } f(n)\E{\alpha}{\N}g(n) \text{ with } e).
\end{aligned}
\]
Hence $f(n)\E{\alpha}{\N}\kappa(g,d)$. The generic element has the analogous property: if $n\E{\alpha}{\N}d$, then for every~$e$, let $e' = \lambda k. \beta(\eta,k)(e,q_0):\N \to \T_2$ and we have
\[
\begin{aligned}
\Omega(d)(e,q_0)(\alpha)
 &=\kappa(\lambda k.\beta(\eta,k),d)(e,q_0)(\alpha) \\
 &=d(\lambda k. \beta(\eta,k)(e,q_0),q_0)(\alpha) \\
 &=d(e',q_0)(\alpha)
  =e'(n)(\alpha)
 && (\text{by } n\E{\alpha}{\N}d \text{ with } e') \\
 &=\beta(\eta,n)(e,q_0)(\alpha) \\
 &=e(\alpha n)(\alpha).
\end{aligned}
\]
Thus $\alpha n\E{\alpha}{\N}\Omega(d)$, which by the function clause gives $\alpha\E{\alpha}{\N \to \N}\Omega$.

\begin{thm}
\label{thm:internal-dialogue}
For every closed term $t:\Baire\to\N$ of~$\T$, the closed term $t^\J\Omega:\mathsf{D}$ Church-encodes the dialogue computation of $t$ in the sense that, for every $\alpha:\N \to \N$,
\[
t(\alpha)=\mathsf{eval}(t^\J\Omega)(\alpha).
\]
\end{thm}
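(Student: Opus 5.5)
The proof follows the same pattern as Theorem~\ref{thm:cont}. It combines the fundamental theorem with the genericity property of~$\Omega$ established just above. The preparatory work is already done: the conditions~(\ref{eq:LR}) were verified for the Church-encoded dialogue nucleus and the relation $\E{\alpha}{\N}$, and we showed $\alpha\E{\alpha}{\N\to\N}\Omega$ for every $\alpha$. So the remaining argument consists of instantiating these facts.

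Fix $\alpha:\N\to\N$. Since the conditions~(\ref{eq:LR}) hold for $\E{\alpha}{\N}$, Theorem~\ref{thm:LR} gives $t\E{\alpha}{\Baire\to\N}t^\J$ for the closed term $t$. Here the context is empty, so the hypothesis $\Gamma\R{}\Gamma^\J$ is trivially true. Unfolding the function clause and applying it to $\alpha\E{\alpha}{\N\to\N}\Omega$ yields $t(\alpha)\E{\alpha}{\N}t^\J\Omega$.

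By the definition of the base relation, this means $t^\J\Omega(e,q_0)(\alpha)=e(t(\alpha))(\alpha)$ for every leaf algebra $e:\N\to\T_2$. We instantiate $e$ with $e_0$. The right-hand side becomes $e_0(t(\alpha))(\alpha)=t(\alpha)$, and the left-hand side is by definition $\mathsf{eval}(t^\J\Omega)(\alpha)$. This gives the required equation.

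There is no genuine obstacle. The only point requiring care is that the relation is parametrized by~$\alpha$, so $\alpha$ must be fixed first and the fundamental theorem applied to that particular relation. Two further remarks are worth making. The quantification over all leaf algebras~$e$ in the base relation, rather than just~$e_0$, is what made the $\kappa$-condition provable, and that step is already done. No function extensionality is needed, since all equations used are computation rules instantiated at~$\alpha$.
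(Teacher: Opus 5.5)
Your proposal is correct and matches the paper's proof: fix $\alpha$, apply the fundamental theorem to the closed term $t$ and the genericity fact $\alpha\E{\alpha}{\N\to\N}\Omega$ to obtain $t(\alpha)\E{\alpha}{\N}t^\J\Omega$, then instantiate the leaf algebra with $e_0$. Your side remarks are also accurate: the quantification over arbitrary leaf algebras is what makes the $\kappa$-condition go through, and no function extensionality is needed here.
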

\begin{proof}
The fundamental theorem and $\alpha\E{\alpha}{\N \to \N}\Omega$ give $t(\alpha)\E{\alpha}{\N} t^\J\Omega$. Taking $e=e_0$ in the definition of the base relation gives the displayed equality.
\end{proof}

\subsection{General bar recursion}
\label{sec:bar}
To prove Schwichtenberg's theorem~\cite{schwichtenberg:brct} that the System $\T$-definable functionals are closed under a rule-like version of Spector's bar recursion of type levels 0 and 1, Oliva and Steila~\cite{OS:bar} introduce a notion of general bar recursion whose termination condition is given by decidable monotone predicates on finite sequences. As the last example, we recover their construction of general-bar-recursion functionals (\cite[Definitions~3.1 \& 3.3]{OS:bar}) via an instantiation of our translation. For this, we use the following notation:
\begin{itemize}
\item We represent decidable predicates as functions $\N^* \to \mathbf{2}$, where $\N^*$ is the type of finite sequences of natural numbers and $\mathbf{2} = \{ 0,1 \}$ is the type of booleans.
\item For any $S:\N^* \to \mathbf{2}$ and $s:\N^*$, we write $S(s)$ instead of $S(s)=1$.
\item For any $s:\N^*$, we write $|s| : \N$ to denote its length and $\hat{s} : \Baire$ the extension of $s$ with infinitely many $0$'s.
\item For any $s:\N^*$ and $n:\N$, we write $s*n : \N^*$ to denote appending $n$ to $s$.
\item For any $s:\N^*$ and $\alpha : \N \to \N$, we write $s*\alpha : \N \to \N$ to denote their concatenation.
\end{itemize}
Note that the treatment of $\N^*$ and $\mathbf{2}$ is not essential. For instance, we can represent a finite sequence~$s$ by a pair $\Pair{\alpha}{n}$ and consider~$s$ as the prefix of the infinite sequence~$\alpha$ of length~$n$ as in our Agda implementation; thus, all the above operations on sequences are definable in~$\T$. Alternatively, one can add finite lists as a primitive type; we extend the translation to such types in Section~\ref{sec:lists}.

We also need the following definitions:
\begin{itemize}
\item We call $\xi : (\N^* \to \sigma) \to (\N^* \to (\N\to\sigma) \to \sigma) \to \N^* \to \sigma$ a functional of \emph{general bar recursion} for $S:\N^* \to \mathbf{2}$ if $\mathcal{GBR}_S(\xi)$ holds, where $\mathcal{GBR}_S(\xi)$ is defined by
\[
\mathcal{GBR}_S(\xi) \eqdef \forall \, G^{\N^* \to \sigma} H^{\N^* \to (\N\to\sigma) \to \sigma} s^{\N^*} \left\{
 \begin{aligned}
 S(s) & \to \xi(G,H,s) = G(s) \\[-3pt]
 & \hspace{4pt} \wedge \\[-3pt]
 \neg S(s) & \to \xi(G,H,s) = H(s, \lambda n^\N.\xi(G,H,s*n))
 \end{aligned}
 \right\}.
\]
\item A predicate $S$ is \emph{monotone} if $S(s)$ implies $S(s * n)$ for all $s:\N^*$ and $n:\N$.
\item For $Y : \Baire \to \N$, we say $S$ \emph{secures} $Y$ if
\[
\forall s^{\N^*} \left( S(s) \to \forall \alpha^{\N \to \N} \, Y(s*\alpha) = Y(\hat{s}) \right).
\]
\end{itemize}

Let $Y : \Baire \to \N$ be a closed term of $\T$. Oliva and Steila show (i)~for any $S$ securing $Y$, from a functional of general bar recursion for $S$ we can construct a functional of Spector's bar recursion for $Y$~\cite[Theorem~2.4]{OS:bar}, and (ii)~we can construct a monotone predicate~$S$ that secures $Y$ and a functional of general bar recursion for $S$~\cite[Theorem~3.4]{OS:bar}. In this way, they give a new proof of Schwichtenberg's bar recursion closure theorem with an explicit construction of Spector's bar-recursion functionals.

We first construct a nucleus for general bar recursion. Fix a type $\sigma$ of $\T$. Let $\JN = (\Baire \to \N) \times (\N^* \to \mathbf{2}) \times ((\N^* \to \sigma) \to (\N^* \to (\N\to\sigma) \to \sigma) \cto \N^* \to \sigma)$. Given $w:\JN$, we write $\rV_w, \rS_w, \rB_w$ to denote its three components. The intuition is that $\rS_w$ is a monotone predicate securing $\rV_w$ and $\rB_w$ is a functional of general bar recursion for $\rS_w$. We define $\eta : \N \to \JN$ by
\[
\eta(n) \eqdef \triple{\lambda \alpha.n}{\lambda s.1}{\lambda G H .G}
\]
and $\kappa : (\N \to \JN) \to \JN \to \JN$ by
\[
\kappa(g,w) \eqdef
\triple{\lambda \alpha. \rV_{g(\rV_w \alpha)}\alpha}
       {\lambda s. \min(\rS_w(s),\rS_{g(\rV_w \hat{s})}(s))}
       {\lambda G H . \rB_w(\lambda s. \rB_{g(\rV_w \hat{s})}(G,H,s),H)}
\]
where $\min : \N \to \N \to \N$ returns the smaller argument. Lastly, we define the generic element $\Omega : \JN \to \JN$ by
\[
\Omega \eqdef \kappa (\lambda n. \triple{\lambda \alpha. \alpha n}{\lambda s. \mathrm{Le}(n,|s|)}{\Psi n})
\]
where $\mathrm{Le} : \N \to \N \to \mathbf{2}$ has value $1$ iff its first argument is strictly smaller than the second, and $\Psi n : (\N^* \to \sigma) \to (\N^* \to (\N\to\sigma) \to \sigma) \to \N^* \to \sigma$ is a $\T$-definable functional of bar recursion for the constant function $Y=\lambda \alpha.n$~\cite[Lemma~2.1]{OS:bar}, \ie
\[
\forall \, G^{\N^* \to \sigma} H^{\N^* \to (\N\to\sigma) \to \sigma} s^{\N^*} \left\{
 \begin{aligned}
 n < |s| & \to \Psi n(G,H,s) = G(s) \\[-3pt]
 & \hspace{4pt} \wedge \\[-3pt]
 n \geq |s| & \to \Psi n(G,H,s) = H(s, \lambda m.\Psi n(G,H,s*m))
 \end{aligned}
 \right\}.
\]
For any $n:\N$, it is clear that $\lambda s. \mathrm{Le}(n,|s|)$ is a monotone predicate that secures $\lambda \alpha.\alpha n$, and that $\Psi n$ is a functional of general bar recursion for $\lambda s. \mathrm{Le}(n,|s|)$.

\begin{thm}
For any closed term $Y : \Baire \to \N$ of $\T$,
\begin{enumerate}
\item $\rS_{Y^\J \Omega}$ is a monotone predicate securing $Y$, and
\item $\rB_{Y^\J \Omega}$ is a functional of general bar recursion for $\rS_{Y^\J \Omega}$.
\end{enumerate}
\end{thm}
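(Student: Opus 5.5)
The plan is to follow the pattern of Theorem~\ref{thm:cont}. We define a logical relation parametrized by $\alpha:\N\to\N$, verify the conditions~(\ref{eq:LR}) and the genericity of $\Omega$, and then read off the two claims from the base relation at $Y\alpha \E{\alpha}{\N} Y^\J\Omega$.

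A relation indexed by a single $\alpha$ cannot express securing, since securing is a global property of $\rV_w$. I would therefore make the base relation consist of a pointwise part and a global part:
\[
n \E{\alpha}{\N} w \eqdef n=\rV_w(\alpha)\wedge \mathrm{Sec}(w)\wedge \mathrm{Mon}(w)\wedge \mathcal{GBR}_{\rS_w}(\rB_w).
\]
Here $\mathrm{Sec}(w)$ says that $\rS_w$ secures $\rV_w$, and $\mathrm{Mon}(w)$ says that $\rS_w$ is monotone. The condition for $\eta$ is immediate. The predicate $\lambda s.1$ is monotone and secures a constant function, and $\lambda GH.G$ satisfies $\mathcal{GBR}$ for the always-true predicate.

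For $\kappa$, assume $f(m)\E{\alpha}{\N}g(m)$ for all $m$, and $n\E{\alpha}{\N}w$. The value equation holds exactly as in the lifting nucleus of Section~\ref{sec:lift}.

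Monotonicity of $\min(\rS_w(s),\rS_{g(\rV_w\hat s)}(s))$ is handled as follows. If both conjuncts hold at $s$, then $\rS_w(s)$ gives $\rV_w(\widehat{s*k})=\rV_w(\hat s)$, because $\rS_w$ secures $\rV_w$ and $\widehat{s*k}=s*(k*\hat{0})$. So the selected index is unchanged, and monotonicity of both $\rS_w$ and $\rS_{g(i)}$ applies.

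Securing is handled similarly. If the combined predicate holds at $s$, then $\rV_w(s*\beta)=\rV_w(\hat s)=:i$. The value of $\kappa(g,w)$ at $s*\beta$ is then $\rV_{g(i)}(s*\beta)=\rV_{g(i)}(\hat s)$, since $\rS_{g(i)}(s)$ holds and $\rS_{g(i)}$ secures $\rV_{g(i)}$ by hypothesis.

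The main obstacle is $\mathcal{GBR}$ for $\rB_{\kappa(g,w)}=\lambda GH.\rB_w(G',H)$ with $G'(s)=\rB_{g(\rV_w\hat s)}(G,H,s)$. Fix $G,H$ and split on $\rS_w(s)$.

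If $\neg\rS_w(s)$, the combined predicate fails and $\rB_w$ unfolds to $H(s,\lambda n.\rB_w(G',H,s*n))$, which is what is needed.

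If $\rS_w(s)$, then $\rB_w(G',H,s)=G'(s)=\rB_{g(i)}(G,H,s)$ with $i=\rV_w\hat s$. Unfolding $\rB_{g(i)}$ gives $G(s)$ when $\rS_{g(i)}(s)$ holds, which is correct. Otherwise it gives $H(s,\lambda n.\rB_{g(i)}(G,H,s*n))$, and I must show $\rB_{g(i)}(G,H,s*n)=\rB_w(G',H,s*n)$. This follows because $\rS_w(s*n)$ holds by monotonicity, so the right side is $G'(s*n)=\rB_{g(\rV_w\widehat{s*n})}(G,H,s*n)$, and the index $\rV_w\widehat{s*n}$ equals $i$ by securing. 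The equality of the two functions $\lambda n.\cdots$ needs function extensionality, which the paper's remark on $\HAo$ already allows.

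For genericity, $\Omega=\kappa(h)$ with $h(n)=\triple{\lambda\alpha.\alpha n}{\lambda s.\mathrm{Le}(n,|s|)}{\Psi n}$. By the facts stated just before the theorem, $\alpha n\E{\alpha}{\N}h(n)$ holds for every $n$. Applying the $\kappa$ condition just proved then gives $\alpha\E{\alpha}{\N\to\N}\Omega$.

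Theorem~\ref{thm:LR} now yields $Y\alpha\E{\alpha}{\N}Y^\J\Omega$ for every $\alpha$, so $Y=\rV_{Y^\J\Omega}$ pointwise. The global conjuncts give monotonicity and $\mathcal{GBR}$ directly. They also give that $\rS_{Y^\J\Omega}$ secures $\rV_{Y^\J\Omega}$, and hence secures $Y$ by this pointwise equality.
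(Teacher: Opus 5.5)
Your proposal is correct and follows the paper's proof step for step. It uses the same base relation (the value at $\alpha$ together with the $\alpha$-independent monotonicity, securing and $\mathcal{GBR}$ conjuncts), the same verification of (\ref{eq:LR}), the same derivation of the genericity of $\Omega$ from the $\kappa$ condition, and the same read-off; your explicit handling of the case $\rS_w(s)\wedge\neg\rS_{g(\rV_w\hat s)}(s)$, via monotonicity and securing of $\rS_w$ plus function extensionality, simply fills in the step the paper calls ``not hard to show''.
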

\begin{proof}
Given $\alpha : \N \to \N$, we define the base case ${\E{\alpha}{\N}} \subseteq \N \times \JN$ of the logical relation by
\[
n \E{\alpha}{\N} w \eqdef n = \rV_w(\alpha) \wedge \rS_w \; \text{is monotone} \wedge \rS_w \; \text{secures} \; \rV_w \wedge \mathcal{GBR}_{\rS_w}(\rB_w).
\]
To apply the fundamental theorem of logical relations, we need to check the conditions (\ref{eq:LR}):
\begin{itemize}
\item {It is trivial to prove $n \E{\alpha}{\N} \eta n$ for all $n:\N$.
\item As to $\kappa$, given $f:\N \to \N$ and $g:\N \to \JN$ such that $fi \E{\alpha}{\N} gi$ for all $i:\N$, our goal is to prove $f \E{\alpha}{\N \to \N} \kappa g$. Let $n:\N$ and $w:\JN$ with $n \E{\alpha}{\N} w$ be given.
\begin{itemize}
\item We have $fn = \rV_{gn}(\alpha) = \rV_{g(\rV_w(\alpha))}(\alpha) = \rV_{\kappa(g,w)}(\alpha)$ as in the previous examples.
\item To prove monotonicity, suppose $\rS_{\kappa(g,w)}(s)$ and let $m:\N$. Then $\rS_w(s)$ and $\rS_{g(\rV_w(\hat{s}))}(s)$.
Since $\rS_w$ secures $\rV_w$ and $\widehat{s*m}$ extends $s$, we have $\rV_w(\widehat{s*m})=\rV_w(\hat{s})$.
By monotonicity of $\rS_w$ and $\rS_{g(\rV_w(\hat{s}))}$, we obtain $\rS_w(s*m)$ and $\rS_{g(\rV_w(\hat{s}))}(s*m)$. The preceding equality then gives $\rS_{\kappa(g,w)}(s*m)$, as required.
\item If $\rS_{\kappa(g,w)}(s)$, then $\rS_{w}(s)$ and $\rS_{g(\rV_w \hat{s})}(s)$ by definition. Given $\alpha:\N \to \N$, we have $\rV_w(s*\alpha)=\rV_w(\hat{s})$ because $\rS_w$ secures $\rV_w$. Then we have
$
\rV_{\kappa(g,w)}(s*\alpha) = \rV_{g(\rV_w(s*\alpha))}(s*\alpha) = \rV_{g(\rV_w(\hat{s}))}(s*\alpha) = \rV_{g(\rV_w(\hat{s}))}(\hat{s}) = \rV_{\kappa(g,w)}(\hat{s})
$
because $\rS_{g(\rV_w\hat{s})}$ secures $\rV_{g(\rV_w\hat{s})}$. Hence $\rS_{\kappa(g,w)}$ secures $\rV_{\kappa(g,w)}$.
\item Lastly we show that $\rB_{\kappa(g,w)}$ is a functional of general bar recursion for $\rS_{\kappa(g,w)}$. Let $G: {\N^* \to \sigma}$, $H: {\N^* \to (\N\to\sigma) \to \sigma}$ and $s: {\N^*} $ be given. (1)~If $\rS_{\kappa(g,w)}(s)$, then $\rS_{w}(s)$ and $\rS_{g(\rV_w \hat{s})}(s)$, and thus we have $\rB_{\kappa(g,w)}(G,H,s)=G(s)$. (2)~If $\neg\rS_{\kappa(g,w)}(s)$, then we have two cases to check: (2.1) If $\rS_w(s)$, then $\neg\rS_{g(\rV_w \hat{s})}(s)$ by definition. It is not hard to show $\rB_{\kappa(g,w)}(G,H,s)=H(s,\lambda n.\rB_{\kappa(g,w)}(G,H,s*n))$. (2.2) If $\neg(\rS_w(s))$, then we always have $\rB_{\kappa(g,w)}(G,H,s)=H(s,\lambda n.\rB_{\kappa(g,w)}(G,H,s*n))$ regardless of whether $\rS_{g(\rV_w \hat{s})}(s)$ holds.
\end{itemize}
Hence, we have $fn \E{\alpha}{\N} \kappa(g,w)$.}
\end{itemize}
Given a closed term $Y: \Baire \to \N$ of $\T$, for any $\alpha:\N \to \N$ we have $Y \E{\alpha}{\Baire \to \N} Y^\J$ by Theorem~\ref{thm:LR}. For any $n:\N$, we have $\alpha n \E{\alpha}{\N} \triple{\lambda \beta. \beta n}{\lambda s. \mathrm{Le}(n,|s|)}{\Psi n}$ by definition; thus, $\alpha \E{\alpha}{\N \to \N} \Omega$ holds by (\ref{eq:LR}) for $\kappa$ which we have just proved. Hence we have $Y\alpha \E{\alpha}{\N} Y^\J\Omega$ for any $\alpha:\N \to \N$. From this, we get (i)~$Y=\rV_{Y^\J\Omega}$ up to pointwise equality, (ii)~$\rS_{Y^\J\Omega}$ is a monotone predicate securing $\rV_{Y^\J\Omega}$ and thus also $Y$, and (iii)~$\rB_{Y^\J\Omega}$ is a functional of general bar recursion for $\rS_{Y^\J\Omega}$.
%
\end{proof}

The above development is just a restructuring of the work of Oliva and Steila~\cite{OS:bar} that fits into our framework. But there are some small differences (or simplifications):
\begin{itemize}
\item \cite{OS:bar} requires the predicate $S$ to satisfy the bar condition $\forall \alpha^{\N \to \N} \exists n^\N S(\bar{\alpha}n)$, where $\bar{\alpha}n : \N^*$ is the prefix of $\alpha$ of length $n$. As pointed out by Makoto Fujiwara in a personal discussion, this condition is not needed for the result. So we remove it in the above development.
\item \cite{OS:bar} assumes that the closed terms $Y:\Baire \to \N$ are of the form $\lambda \alpha.t$ where $\alpha : \N \to \N$ is the only free variable in $t:\N$, and treats $\alpha$ as a special constant (for the generic element). Motivated by a version of Escard\'o's Agda development of~\cite{escardo:dialogue}, we avoid such extension by using the lifting nucleus that is introduced in Section~\ref{sec:lift}.
\item The syntactic translation of $\T$ in~\cite[Definitions 3.1 \& 3.3]{OS:bar} contains only the construction of general-bar-recursion functionals while the construction of monotone securing predicates is given in the proof of the main theorem~\cite[Theorem~3.4]{OS:bar}. We combine both in the translation to separate the constructions from the correctness proof.
\end{itemize}

An anonymous reviewer of the FSCD 2020 version of this paper~\cite{xu:gentzen2020} observed that the above construction also yields moduli of continuity: Let $Y:\Baire \to \N$ be a closed term of $\T$. The monotone predicate $\rS_{Y^\J\Omega}$ is a bar, \ie~$\forall \alpha^{\N \to \N} \exists n^\N \rS_{Y^\J\Omega}(\bar{\alpha}n)$, as shown in~\cite[Theorem~3.4]{OS:bar}. This together with the fact that it secures $Y$ implies the pointwise continuity of $Y$. The witness of the fact that $\rS_{Y^\J\Omega}$ is a bar obtained via \eg~modified realizability~\cite{kohlenbach:book} is a continuity modulus of $Y$. Our translation in Section~\ref{sec:cont} is just an explicit procedure to get the witnesses implicit in the proof of~\cite[Theorem~3.4]{OS:bar}. On the other hand, the reviewer points out that, taking $\sigma=\N$ in the bar-recursion nucleus above, we can construct a modulus $M : \Baire \to \N$ of uniform continuity of $Y$ by
\[
M(\delta) \eqdef \rB_{Y^\J\Omega}(G,H^\delta,\nil)
\]
where $G(s)\eqdef 0$, $H^\delta(s,f) \eqdef 1+\max\{ fn \mid n \leq \delta(|s|) \}$ and $\nil:\N^*$ is the empty sequence. The idea is that if $s$ is bounded componentwise by $\delta$, meaning $s(i)\leq\delta(i)$ for every $i<|s|$, then $\rB_{Y^\J\Omega}(G,H^\delta,s)$ is a modulus of uniform continuity of the function $\lambda \alpha.Y(s * \alpha)$ on $\{ \alpha:\N \to \N \mid \forall i \, \alpha(i) \leq \delta(|s|+i) \}$: If $\rS_{Y^\J\Omega}(s)$, we know $\lambda \alpha.Y(s * \alpha)$ is a constant function because $\rS_{Y^\J\Omega}$ secures $Y$. Then $\rB_{Y^\J\Omega}(G,H^\delta,s) = G(s) = 0$ is a proper modulus. If $\neg \rS_{Y^\J\Omega}(s)$, then the step functional $H^\delta$ finds the greatest value of the moduli of $\lambda \alpha.Y(s * n * \alpha)$ given by $\rB_{Y^\J\Omega}(G,H^\delta,s*n)$ for $n \leq \delta(|s|)$ and then adds 1 to get a modulus of $\lambda \alpha.Y(s * \alpha)$. Each such extension $s*n$ is again bounded componentwise by $\delta$, so the same claim applies at the recursive calls.

\section{Extensions to sums and lists}
\label{sec:gen}

We extend the Gentzen-style translation to sum types and finite lists. This requires the nucleus to act at types beyond $\N$, so we first generalize the notion of nucleus, then give the new translation clauses and the conditions under which the fundamental theorem continues to hold.

\subsection{Generalized nuclei}
\label{sec:gen-nucleus}

The proof-theoretic translation of Section~\ref{sec:neg-trans} suggests letting the nucleus act at finite types beyond~$\N$. We use the following generalization.

\begin{defi}[Generalized nucleus]
\label{def:gen-nucleus}
A \emph{generalized nucleus} for a system of finite types extending~$\T$ is a triple $(\J,\eta,\kappa)$, where $\J$ assigns a type $\J\sigma$ to each type $\sigma$, and $\eta$ and $\kappa$ are families of terms with instances
\[
\eta_\sigma : \sigma \to \J\sigma
\qquad
\kappa_{\sigma,\tau} : (\sigma \to \J\tau) \to \J\sigma \to \J\tau
\]
for every pair of types $\sigma,\tau$. No monad laws are required.
\end{defi}

Restricting to the base type recovers the original nucleus $(\J\N,\eta_\N,\kappa_{\N,\N})$, so the clauses of Definition~\ref{def:trans} for System~$\T$ remain unchanged. We suppress the type indices on $\eta$ and $\kappa$ when they can be inferred.

To handle function and product result types, we extend~$\kappa$ as before. For any type~$A$ and result type~$\rho$, define $\ke_{A,\rho}:(A\to\rho^\J)\to\J A\to\rho^\J$ inductively on~$\rho$:
\[
\begin{aligned}
\ke_{A,\N}(g,w)
  &\eqdef \kappa_{A,\N}(g,w),\\
\ke_{A,\sigma\to\tau}(g,w)
  &\eqdef \lambda x^{\sigma^\J}.
     \ke_{A,\tau}(\lambda a^A.g(a,x),w),\\
\ke_{A,\sigma\times\tau}(g,w)
  &\eqdef \pair{\ke_{A,\sigma}(\fst\circ g,w)}
                    {\ke_{A,\tau}(\snd\circ g,w)}.
\end{aligned}
\]
The Kleisli extension $\ke_\rho$ used for the natural-number recursor is the case $\ke_{\N,\rho}$. We add clauses for sum and list result types below.

\subsection{Sums}
\label{sec:sums}

Extend~$\T$ with binary sum types $\sigma+\tau$, injections
\[
\inl^{\sigma,\tau}:\sigma\to\sigma+\tau
\qquad
\inr^{\sigma,\tau}:\tau\to\sigma+\tau,
\]
and case analysis
\[
\case_{\sigma,\tau,\rho}:(\sigma\to\rho)\to(\tau\to\rho)\to\sigma+\tau\to\rho.
\]
The computation rules are
\[
\case_{\sigma,\tau,\rho}(f,g,\inl^{\sigma,\tau} a)=f(a)
\qquad
\case_{\sigma,\tau,\rho}(f,g,\inr^{\sigma,\tau} b)=g(b).
\]

Gentzen's translation places a double negation in front of disjunctions (Section~\ref{sec:neg-trans}). Following the same pattern, we place $\J$ around the sum of the translated component types:
\[
(\sigma+\tau)^\J \eqdef \J(\sigma^\J+\tau^\J).
\]
Therefore, we extend the Kleisli operation with the following clause:
\[
\ke_{A,\sigma+\tau}: (A \to (\sigma+\tau)^\J) \to \J A \to (\sigma+\tau)^\J
\qquad
\ke_{A,\sigma+\tau}(g,w) \eqdef \kappa_{A,\sigma^\J+\tau^\J}(g,w).
\]
The term translation then has the following new clauses:
\[
\begin{aligned}
(\inj_i^{\sigma,\tau})^\J
  &\eqdef \eta\circ\inj_i^{\sigma^\J,\tau^\J}
  &&\text{for }i\in\{1,2\},\\
(\case_{\sigma,\tau,\rho})^\J
  &\eqdef \lambda f^{\sigma^\J\to\rho^\J}\,g^{\tau^\J\to\rho^\J}.\,
    \ke_{\sigma^\J+\tau^\J,\rho}(\case_{\sigma^\J,\tau^\J,\rho^\J}(f,g)).
\end{aligned}
\]
In the last clause, the ordinary case operation gives a function $\sigma^\J+\tau^\J\to\rho^\J$, and $\ke$ extends it to the desired type $\J(\sigma^\J+\tau^\J) \to \rho^\J$.

\subsection{Finite lists}
\label{sec:lists}

Extend~$\T$ with finite list types $\mathsf{List}(\sigma)$, constructors
\[
\mathsf{nil}_\sigma:\mathsf{List}(\sigma)
\qquad
\mathsf{cons}_\sigma:\sigma\to\mathsf{List}(\sigma)\to\mathsf{List}(\sigma),
\]
and the primitive recursor
\[
\mathsf{listrec}_{\sigma,\rho}:\rho\to
(\sigma\to\mathsf{List}(\sigma)\to\rho\to\rho)
\to\mathsf{List}(\sigma)\to\rho.
\]
Its computation rules are
\[
\begin{aligned}
\mathsf{listrec}_{\sigma,\rho}(b,h,\mathsf{nil}_\sigma)
  &= b,\\
\mathsf{listrec}_{\sigma,\rho}(b,h,\mathsf{cons}_\sigma(a,l))
  &= h(a,l,\mathsf{listrec}_{\sigma,\rho}(b,h,l)).
\end{aligned}
\]

As with sums, we translate list types by applying~$\J$ to the list type of translated elements:
\[
\mathsf{List}(\sigma)^\J\eqdef\J(\mathsf{List}(\sigma^\J)).
\]
For a list result type, the Kleisli operation gains the clause
\[
\ke_{A,\mathsf{List}(\sigma)}:
(A\to\mathsf{List}(\sigma)^\J)\to\J A\to\mathsf{List}(\sigma)^\J
\qquad
\ke_{A,\mathsf{List}(\sigma)}(g,w)
\eqdef\kappa_{A,\mathsf{List}(\sigma^\J)}(g,w).
\]
The term translation has three new clauses. For the constructors, we take
\[
\begin{aligned}
(\mathsf{nil}_\sigma)^\J
  &\eqdef \eta(\mathsf{nil}_{\sigma^\J}),\\
(\mathsf{cons}_\sigma)^\J
  &\eqdef \lambda a^{\sigma^\J}.
    \kappa(\eta\circ\mathsf{cons}_{\sigma^\J}(a)).
\end{aligned}
\]
The composite $\eta\circ\mathsf{cons}_{\sigma^\J}(a)$ maps an ordinary list of translated elements to a translated list; then $\kappa$ extends it to accept a translated list as input.

For the recursor, let $b:\rho^\J$ and $h:\sigma^\J\to\mathsf{List}(\sigma)^\J\to\rho^\J\to\rho^\J$ be the translated base and step arguments, and define
\[
(\mathsf{listrec}_{\sigma,\rho})^\J
\eqdef \lambda b h.\,
\ke_{\mathsf{List}(\sigma^\J),\rho}
\bigl(
  \mathsf{listrec}_{\sigma^\J,\rho^\J}
    (b,\lambda a\,l.h(a,\eta(l)))
\bigr).
\]
We first recurse over an ordinary list of translated elements. The ordinary recursor supplies its step function with a tail $l:\mathsf{List}(\sigma^\J)$, whereas $h$ expects a tail of type $\mathsf{List}(\sigma)^\J=\J(\mathsf{List}(\sigma^\J))$. Applying $\eta$ to $l$ gives the required argument. Thus the recursor $\mathsf{listrec}_{\sigma^\J,\rho^\J}(b,\lambda a\,l.h(a,\eta(l)))$ has type $\mathsf{List}(\sigma^\J)\to\rho^\J$. We then use $\ke$ to accept a translated list of type~$\J(\mathsf{List}(\sigma^\J))$.

\subsection{The fundamental theorem for sums and lists}
\label{sec:extended-fundamental}

The fundamental theorem now needs conditions on $\eta$ and $\kappa$ for all three data type forms: natural numbers, sums, and lists. In particular, the input and result types of $\kappa$ may correspond to different data type forms. We express these conditions uniformly.

Choose relations $\R\N$, $\R{\sigma+\tau}$, and $\R{\mathsf{List}(\sigma)}$ between terms of each source type and its translation, and extend them to functions and products as in Theorem~\ref{thm:LR}. Let $D$ and $E$ range over natural-number, sum, and list types. For each such type $D$, write $D^\J=\J A_D$, where
\[
A_\N=\N,\qquad
A_{\sigma+\tau}=\sigma^\J+\tau^\J,\qquad
A_{\mathsf{List}(\sigma)}=\mathsf{List}(\sigma^\J).
\]
We relate source values of $D$ to values of the \emph{raw} type $A_D$ by $S_D$. Specifically, $n\mathrel{S_\N}m$ means $n=m$; at sums, $S_{\sigma+\tau}$ relates matching injections whose arguments are $\R\sigma$- or $\R\tau$-related; and at lists, $S_{\mathsf{List}(\sigma)}$ relates lists of the same length whose corresponding entries are $\R\sigma$-related.

\begin{thm}[Fundamental theorem with sums and lists]
\label{thm:extended-LR}
Suppose that, for all such types $D$ and $E$:
\begin{enumerate}
\item If $x\mathrel{S_D}a$, then $x\R D\eta_{A_D}(a)$.
\item If $f:D\to E$ and $g:A_D\to\J A_E$ satisfy $f(x)\R E g(a)$ whenever $x\mathrel{S_D}a$, then
\[
f\R{D\to E}\kappa_{A_D,A_E}(g).
\]
\end{enumerate}
Then, for every term $\Gamma\vdash t:\rho$ of System~$\T$ extended with sums and finite lists,
\[
\Gamma\R{}\Gamma^\J\quad\Longrightarrow\quad t\R\rho t^\J.
\]
In particular, $t\R\rho t^\J$ for every closed term $t:\rho$.
\end{thm}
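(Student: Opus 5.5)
The proof extends the argument of Theorem~\ref{thm:LR}: induct on the structure of $t$, assuming $\Gamma\R{}\Gamma^\J$. The cases for variables, abstraction, application, pairing and projections go through verbatim, since the relation on function and product types is defined by the same clauses. The cases $\z$ and $\suc$ are instances of conditions (1) and (2) with $D=E=\N$. Indeed, $n\mathrel{S_\N}n$ gives $n\R\N\eta n$. Moreover, $\suc(x)\R\N\eta(\suc a)$ whenever $x\mathrel{S_\N}a$, so (2) yields $\suc\R{\N\to\N}\kappa(\eta\circ\suc)$. What remains are the recursors and the new constants, and all of them rest on one generalized lemma about $\ke$.

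\emph{Key lemma.} For every data type $D$ and every result type $\rho$: if $f:D\to\rho$ and $g:A_D\to\rho^\J$ satisfy $f(x)\R\rho g(a)$ whenever $x\mathrel{S_D}a$, then $f\R{D\to\rho}\ke_{A_D,\rho}(g)$. I will prove this by induction on $\rho$, following claim~(1) in the proof of Theorem~\ref{thm:LR}.
\begin{itemize}
\item When $\rho$ is itself a data type $E$ (that is, $\N$, a sum, or a list), we have $\ke_{A_D,E}=\kappa_{A_D,A_E}$, so the statement is exactly condition~(2).
\item At function types, fix $u\R\sigma v$ and apply the induction hypothesis at $\tau$ to $\lambda x.f(x,u)$ and $\lambda a.g(a,v)$.
\item At product types, argue componentwise with $\fst\circ g$ and $\snd\circ g$.
\end{itemize}
The point to note is that condition~(2) is formulated with $S_D$ on inputs rather than $\R D$. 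This is what makes the lemma usable: each ordinary eliminator, applied to raw values, is related only to $S_D$-related arguments.

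\emph{Constructors.} For $\inj_i$, suppose $a\R\sigma a'$. Then $\inl a\mathrel{S_{\sigma+\tau}}\inl a'$ by definition, and (1) gives $\inl a\R{\sigma+\tau}\eta(\inl a')$; the case of $\inr$ is symmetric. For $\nil$, use $\nil\mathrel{S}\nil$ together with (1). For $\mathsf{cons}$, fix $a\R\sigma a'$. For any $l\mathrel{S_{\mathsf{List}(\sigma)}}l'$, the lists $\mathsf{cons}(a,l)$ and $\mathsf{cons}(a',l')$ have the same length and related entries, so they are $S$-related, and (1) gives $\mathsf{cons}(a,l)\R{}\eta(\mathsf{cons}(a',l'))$. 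Condition (2) with $D=E=\mathsf{List}(\sigma)$ then yields $\mathsf{cons}(a)\R{}\kappa(\eta\circ\mathsf{cons}(a'))$.

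\emph{Eliminators.} For $\case$, take $f\R{}f'$ and $g\R{}g'$. If $x\mathrel{S_{\sigma+\tau}}y$, the two sides are matching injections with related arguments. The computation rules then give $\case(f,g,x)\R\rho\case(f',g',y)$, and the key lemma finishes the case.

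For $\rec$, replay claim~(2) of Theorem~\ref{thm:LR}: by induction on $n$, show $\rec(x,f,n)\R\sigma\rec(y,g\circ\eta,n)$, using $n\mathrel{S_\N}n$ and (1) to get $n\R\N\eta n$. Then apply the key lemma.

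For $\mathsf{listrec}$, take $b\R\rho b'$ and $h\R{}h'$. By induction on the list $l$, prove that for every $l'$ with $l\mathrel{S}l'$,
\[
\mathsf{listrec}(b,h,l)\R\rho\mathsf{listrec}(b',\lambda a\,k.h'(a,\eta k),l').
\]
Since $l$ and $l'$ have the same length, both are $\nil$ or both are conses. The $\nil$ case reduces to $b\R\rho b'$. In the cons case, the heads are $\R\sigma$-related, the tails are $S$-related, and (1) turns the tails into $\R{\mathsf{List}(\sigma)}$-related arguments $\eta(\cdot)$. Applying $h\R{}h'$ to the head, the tail, and the induction hypothesis gives the claim. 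The key lemma then finishes the case.

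I expect the main obstacle to be the key lemma, specifically getting the indexing right so that its data-type base case is literally condition~(2) for arbitrary pairs $D,E$. This is why the hypothesis is stated for all combinations of natural-number, sum and list types. The list-recursor induction is the other delicate step, because the tail must be lifted by $\eta$ and this relies on condition~(1) at list type.
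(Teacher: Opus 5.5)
Your proposal is correct and follows the paper's proof essentially step for step. The paper also first extends condition~(2) from $\kappa$ to $\ke_{A_D,\rho}$ by induction on the result type $\rho$, and then inducts on $t$. It handles injections and $\mathsf{nil}$ by condition~(1), $\mathsf{cons}$ by condition~(2) at list types, and $\case$, $\rec$, $\mathsf{listrec}$ by an inner induction followed by the extended lemma; for $\mathsf{listrec}$ the inner induction is over $S$-related lists, with tails lifted by $\eta$ via condition~(1).
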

\begin{proof}
We first show that condition~(2) extends from $\kappa$ to $\ke_{A_D,\rho}$ for every result type $\rho$. Suppose $f:D\to\rho$ and $g:A_D\to\rho^\J$ satisfy $f(x)\R\rho g(a)$ whenever $x\mathrel{S_D}a$. We claim that
\[
f\R{D\to\rho}\ke_{A_D,\rho}(g).
\]
Induct on $\rho$. For $\rho=\N$, a sum type, or a list type, $\ke_{A_D,\rho}$ is the corresponding instance of $\kappa$, so the claim is condition~(2). For a function result type, apply the induction hypothesis after fixing related function arguments. For a product result type, apply it to the two projections of $f$ and $g$. These are exactly the function and product clauses in the definition of $\ke$.

We now induct on $t$, as in Theorem~\ref{thm:LR}. The original cases use condition~(1) at $\N$, condition~(2) for $D=E=\N$, and the claim above for $D=\N$ when the recursor has an arbitrary result type.

For an injection, related arguments give $S_{\sigma+\tau}$-related injections; condition~(1) relates the source injection to its translation. For case analysis, related branch functions make the source and raw translated case functions pointwise related on $S_{\sigma+\tau}$. The claim for $D=\sigma+\tau$ then relates the source case function to its translation.

The empty-list constructor follows from condition~(1). For $\mathsf{cons}$, fix related heads. Adjoining them to $S_{\mathsf{List}(\sigma)}$-related tails gives $S_{\mathsf{List}(\sigma)}$-related lists, which condition~(1) relates after applying $\eta$. Condition~(2) with list input and result types then gives the translated constructor. For $\mathsf{listrec}$, induction on a pair of $S_{\mathsf{List}(\sigma)}$-related lists relates the source recursor to the inner translated recursor. At each step, condition~(1) relates the source tail to $\eta$ of the raw translated tail, so the relation on the translated step function applies. Finally, the claim for $D=\mathsf{List}(\sigma)$ relates the source recursor to its translation, which applies $\ke_{\mathsf{List}(\sigma^\J),\rho}$ to the inner recursor.
\end{proof}

The two conditions must be checked for each chosen generalized nucleus and logical relation, just as the conditions of Theorem~\ref{thm:LR} must be checked for an ordinary nucleus.

Natural numbers, sums, and finite lists suggest a common pattern for inductive types. Extending the construction and its logical-relations theorem uniformly to strictly positive datatypes is left for future work.

\section{Other monadic translations}
\label{sec:variants}

We first present the Kuroda-style translation, which provides another way to extract moduli of continuity. This gives a direct comparison with the Gentzen-style construction of Section~\ref{sec:cont}. We then briefly recall the Kolmogorov-style translation. Both use a generalized nucleus as in Section~\ref{sec:gen-nucleus}.

\subsection{The Kuroda-style monadic translation}
\label{sec:kuroda}

On base and function types, Powell's call-by-value monadic translation has the same type-translation pattern and bind-based application~\cite[Sections~3 and~5]{powell:ccht}. Here we formulate that pattern within System~$\T$ using a generalized nucleus. Its proof-theoretic analogue is the Kuroda variant that also places double negations in front of conclusions of implications~\cite[Section~6]{FO:negative}.

\medskip\noindent\textbf{Translation.}
The translation of application uses the following operation: for $f:\J(\sigma\to\J\tau)$ and $a:\J\sigma$, put
\[
f\bullet a\eqdef\kappa(\lambda g^{\sigma\to\J\tau}.\kappa(g,a),\,f).
\]
This operation uses $\kappa$ both to obtain the function from $f$ and to supply it with an argument from $a$.

\begin{defi}[Kuroda-style monadic translation]
We assign to each type $\rho$ a type $\J\ku{\rho}$ where $\ku{\rho}$ is defined as follows:
\[
\begin{aligned}
\ku{\N} & \eqdef \N &&\qquad&
\ku{\sigma \times \tau} & \eqdef \ku{\sigma} \times \ku{\tau} \\
\ku{\sigma \to \tau} & \eqdef \ku{\sigma} \to \J\ku{\tau} &&&
\ku{\sigma + \tau} & \eqdef \ku{\sigma} + \ku{\tau}.
\end{aligned}
\]
Each term $\Gamma \vdash t:\rho$ is translated to a term $\ku{\Gamma} \vdash \ku{t}:\J\ku{\rho}$, where $\ku{\Gamma}$ replaces each $x:\sigma \in \Gamma$ with a fresh variable $\bar{x}:\ku{\sigma}$, and $\ku{t}$ is defined inductively as follows:
\[
\begin{aligned}
\ku{x} & \eqdef \eta(\bar{x}) & \qquad & &
\ku{0} & \eqdef \eta(0) \\
\ku{\lambda x.t} & \eqdef \eta(\lambda \bar{x}. \ku{t}) & \qquad & &
\ku{\suc} & \eqdef \eta(\eta \circ \suc) \\
\ku{tu} & \eqdef \ku{t} \bullet \ku{u} & \qquad & &
\ku{\rec} & \eqdef \eta(\lambda a. \eta(\lambda f. \eta (\rec^\bullet(a,f)))) \\
\ku{\tpair} & \eqdef \eta(\lambda a. \eta(\lambda b. \eta (\tpair(a,b)))) & \qquad & &
\ku{\inj_i} & \eqdef \eta(\eta \circ \inj_i) \\
\ku{\pr_i} & \eqdef \eta(\eta \circ \pr_i) & \qquad & &
\ku{\case} & \eqdef \eta(\lambda f. \eta(\lambda g. \eta (\case(f,g))))
\end{aligned}
\]
where $\rec^\bullet : \sigma \to (\N \to \J(\sigma \to \J\sigma)) \to \N \to \J\sigma$ is defined by
\[
\begin{aligned}
\rec^\bullet(a,f,0) & \eqdef \eta(a) \\
\rec^\bullet(a,f,n+1) & \eqdef fn \bullet \rec^\bullet(a,f,n).
\end{aligned}
\]

\end{defi}

The soundness conditions use two relations, corresponding to the value type $\ku\rho$ and the computation type $\J\ku\rho$ of the translation. We formulate them as a fundamental theorem of logical relations.

\begin{thm}[Fundamental theorem for the Kuroda-style translation]
\label{thm:kuroda-LR}
For each type $\rho$, let $V_\rho$ relate source terms of type $\rho$ to terms of type $\ku\rho$, and let $C_\rho$ relate source terms of type $\rho$ to terms of type $\J\ku\rho$. Take $V_\N$ to be equality, extend $V$ componentwise to products and by matching injections to sums, and define
\[
f\mathrel{V_{\sigma\to\tau}}g
\eqdef
\forall x^{\sigma}a^{\ku\sigma}
\bigl(x\mathrel{V_\sigma}a\to f(x)\mathrel{C_\tau}g(a)\bigr).
\]
Suppose the following hold for all types $\sigma,\tau$:
\begin{enumerate}
\item If $x\mathrel{V_\sigma}a$, then $x\mathrel{C_\sigma}\eta(a)$.
\item If $f:\sigma\to\tau$ and $g:\ku\sigma\to\J\ku\tau$ satisfy $f(x)\mathrel{C_\tau}g(a)$ whenever $x\mathrel{V_\sigma}a$, then $f(x)\mathrel{C_\tau}\kappa(g,w)$ whenever $x\mathrel{C_\sigma}w$.
\end{enumerate}
Assignments to $\Gamma$ and $\ku\Gamma$ are related when the terms assigned to each $x:\sigma$ and its translated variable $\bar{x}:\ku\sigma$ are related by $V_\sigma$. For every term $\Gamma\vdash t:\rho$ of System~$\T$ with products and sums, related assignments imply $t\mathrel{C_\rho}\ku t$. In particular, $t\mathrel{C_\rho}\ku t$ for every closed term $t:\rho$.
\end{thm}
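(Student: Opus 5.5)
We argue by structural induction on $t$, fixing related assignments to $\Gamma$ and $\ku\Gamma$, exactly as in the proof of Theorem~\ref{thm:LR}. Before the induction we record one derived fact. If $f\mathrel{C_{\sigma\to\tau}}F$ and $x\mathrel{C_\sigma}w$, then
\[
fx\mathrel{C_\tau}F\bullet w .
\]
To see this, recall $F\bullet w=\kappa(\lambda g.\kappa(g,w),F)$. For any $h:\sigma\to\tau$ and $g:\ku\sigma\to\J\ku\tau$ with $h\mathrel{V_{\sigma\to\tau}}g$, the definition of $V_{\sigma\to\tau}$ supplies exactly the hypothesis of condition~(2). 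Applying (2) to $x\mathrel{C_\sigma}w$ therefore gives $hx\mathrel{C_\tau}\kappa(g,w)$. Now regard the outer $\kappa$ as an instance of condition~(2) at the types $\sigma\to\tau$ and $\tau$, with source function $\lambda h.hx$ and translated function $\lambda g.\kappa(g,w)$. The pointwise hypothesis is what we just proved, and $f\mathrel{C_{\sigma\to\tau}}F$ then yields $fx\mathrel{C_\tau}F\bullet w$. The only subtlety is that (2) must be applied at a function input type; the statement quantifies over all $\sigma,\tau$, so this is allowed.

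With the bullet lemma, the structural cases are direct. For a variable, the context assumption gives $x\mathrel{V}\bar x$, and condition~(1) turns this into $x\mathrel{C}\eta(\bar x)$. For an abstraction $\lambda x.u$, the induction hypothesis applied to extended related assignments gives $(\lambda x.u)\mathrel{V_{\sigma\to\tau}}(\lambda\bar x.\ku u)$, using the computation rule $(\lambda x.u)y=u[y/x]$; condition~(1) then finishes. Application is exactly the bullet lemma. For $0$, we have $0\mathrel{V_\N}0$ and apply~(1).

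The constants $\suc$, $\tpair$, $\pr_i$, $\inj_i$ and $\case$ follow the same pattern. We show that the term inside the outermost $\eta$ is $V$-related to the source constant, unfolding $V$ at arrow types one argument at a time. Each intermediate $\eta$ is handled by~(1), and the final layer uses the computation rules. For example, $\case(f,g)$ is $V$-related to $\case(f',g')$ at $\sigma+\tau\to\rho$: matching injections reduce both sides to $f(a)$ versus $f'(a')$, which are $C$-related because $f\mathrel V f'$ and $a\mathrel V a'$. For products and sums, $V$ is componentwise or matching by definition, so the pairing, projection and injection cases are immediate.

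The main step is the recursor. After unfolding, it suffices to show that $x\mathrel{V_\sigma}a$ and $f\mathrel{V_{\N\to\sigma\to\sigma}}f'$ imply $\rec(x,f)\mathrel{V_{\N\to\sigma}}\rec^\bullet(a,f')$. Since $V_\N$ is equality, this reduces to proving $\rec(x,f,n)\mathrel{C_\sigma}\rec^\bullet(a,f',n)$ by induction on $n$.
\begin{itemize}
\item The case $n=0$ is condition~(1).
\item For $n+1$, we have $f(n)\mathrel{C_{\sigma\to\sigma}}f'(n)$ from $f\mathrel V f'$ at $n\mathrel{V_\N}n$. The induction hypothesis gives $\rec(x,f,n)\mathrel{C_\sigma}\rec^\bullet(a,f',n)$. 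The bullet lemma now yields $f(n,\rec(x,f,n))\mathrel{C_\sigma}f'(n)\bullet\rec^\bullet(a,f',n)$, which is the case $n+1$ by the recursor equations.
\end{itemize}
I expect the bullet lemma, in particular the instantiation of~(2) at a function input type, to be the only non-routine point. Everything else is bookkeeping with (1) and the computation rules.
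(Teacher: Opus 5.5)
Your proposal is correct and matches the paper's proof. You use the same structural induction and the same derived application lemma, obtained by applying condition~(2) first to the argument and then, at the function input type $\sigma\to\tau$, to the function. The recursor is likewise handled by induction on $n$, using (1) at zero and the application lemma at the successor; you only spell out the constant cases, which the paper covers in one line.
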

\begin{proof}
Induct on the term. The first condition handles variables and the unit in the translation of constants and abstractions. The arrow clause and the induction hypothesis relate each abstraction to the function inside its translation. The second condition, applied first to the argument and then to the function, gives
\begin{equation}
\label{eq:kuroda-application}
f\mathrel{C_{\sigma\to\tau}}F\ \wedge\ x\mathrel{C_\sigma}a
\quad\Longrightarrow\quad
f(x)\mathrel{C_\tau}(F\bullet a).
\end{equation}
This establishes the application case.

For the recursor, fix $a\mathrel{V_\sigma}a'$ and $h\mathrel{V_{\N\to\sigma\to\sigma}}h'$. Induction on $n$ gives
\[
\rec_\sigma(a,h,n)\mathrel{C_\sigma}\rec^\bullet(a',h',n).
\]
The zero case uses condition~(1). In the successor case, the arrow clause gives $h(n)\mathrel{C_{\sigma\to\sigma}}h'(n)$; the induction hypothesis and the application case then give the result. This establishes the translated recursor; the product and sum constants follow from their value-relation clauses and condition~(1).
\end{proof}

\medskip\noindent\textbf{Continuity.}
For comparison with Section~\ref{sec:cont}, we use the following generalized nucleus. At each type~$\sigma$, put
\[
\J\sigma\eqdef\N\to\Baire\to(\sigma\times\N).
\]
For $w:\J\sigma$, write $w(n,\alpha)=\pair{\rV_w(n,\alpha)}{\rM_w(n,\alpha)}$. Define
\[
\begin{aligned}
\eta_\sigma(x)(n,\alpha)
  &\eqdef \pair{x}{n},\\
\kappa_{\sigma,\tau}(g,w)(n,\alpha)
  &\eqdef \pair{\rV_{g(\rV_w(n,\alpha))}(n,\alpha)}
    {\max\bigl\{n,\rM_w(n,\alpha),
      \rM_{g(\rV_w(n,\alpha))}(n,\alpha)\bigr\}}.
\end{aligned}
\]
The extra argument~$n$ is an incoming bound, which $\eta$ preserves and $\kappa$ includes in the maximum.

Fix $\alpha:\Baire$ and $n:\N$. We define $V_\rho^{\alpha,n}$ and $C_\rho^{\alpha,n}$ simultaneously by induction on~$\rho$. The value relation uses the clauses of Theorem~\ref{thm:kuroda-LR}, with equality at~$\N$. For $w:\J\ku\rho$, put
\[
\begin{aligned}
x\mathrel{C_\rho^{\alpha,n}}w\quad\eqdef\quad
&x\mathrel{V_\rho^{\alpha,n}}\rV_w(n,\alpha)
\quad\wedge\quad n\leq\rM_w(n,\alpha)\\
&{}\wedge\quad
\forall\beta\bigl(\alpha=_{\rM_w(n,\alpha)}\beta
\to\rV_w(n,\beta)=\rV_w(n,\alpha)\bigr).
\end{aligned}
\]
Thus the value component is related to~$x$, while the second component bounds the part of the input sequence on which that value depends.

These relations satisfy both conditions of Theorem~\ref{thm:kuroda-LR}. For condition~(1), $\eta(a)(n,\beta)=\pair{a}{n}$ for every~$\beta$, so its value is constant and its modulus is~$n$. For condition~(2), suppose $x\mathrel{C_\sigma^{\alpha,n}}w$ and let $a=\rV_w(n,\alpha)$. The premise gives $f(x)\mathrel{C_\tau^{\alpha,n}}g(a)$. Write $m=\rM_w(n,\alpha)$ and $k=\rM_{g(a)}(n,\alpha)$. If $\alpha=_{\max\{n,m,k\}}\beta$, the modulus~$m$ gives $\rV_w(n,\beta)=a$, and the modulus~$k$ gives $\rV_{g(a)}(n,\beta)=\rV_{g(a)}(n,\alpha)$. These are the value and stability requirements for $f(x)\mathrel{C_\tau^{\alpha,n}}\kappa(g,w)$; its modulus $\max\{n,m,k\}$ is at least~$n$.

To extract a modulus, define $q:\ku\Baire=\N\to\J\N$ and $\Omega:\J\ku\Baire$ by
\[
q(i)(n,\beta)\eqdef\pair{\beta(i)}{\max\{n,i+1\}},
\qquad
\Omega\eqdef\eta(q).
\]
For every $\alpha$ and~$n$, we have $\alpha(i)\mathrel{C_\N^{\alpha,n}}q(i)$: agreement through $\max\{n,i+1\}$ preserves the value at~$i$. Hence $\alpha\mathrel{V_\Baire^{\alpha,n}}q$, and condition~(1) gives $\alpha\mathrel{C_\Baire^{\alpha,n}}\Omega$.

\begin{cor}
\label{cor:kuroda-continuity}
Every closed term $f:\Baire\to\N$ of~$\T$ has a modulus of continuity represented by a term of~$\T$, given by
\[
M_f(\alpha)\eqdef\rM_{\ku f\bullet\Omega}(0,\alpha).
\]
\end{cor}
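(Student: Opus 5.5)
The plan is to apply Theorem~\ref{thm:kuroda-LR} to the relations $V^{\alpha,0}$ and $C^{\alpha,0}$, with $n=0$ fixed, and then read off the modulus from the base relation $C_\N^{\alpha,0}$.

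First, fix $\alpha:\Baire$. It has already been checked above that $V^{\alpha,n}$ and $C^{\alpha,n}$ satisfy conditions~(1) and~(2) of Theorem~\ref{thm:kuroda-LR} for every $n$, in particular for $n=0$. The theorem then gives $f\mathrel{C^{\alpha,0}_{\Baire\to\N}}\ku f$ for the closed term $f$. We also showed $\alpha\mathrel{C^{\alpha,0}_{\Baire}}\Omega$. The application property~(\ref{eq:kuroda-application}), which was derived from condition~(2) in the proof of Theorem~\ref{thm:kuroda-LR}, now yields
\[
f(\alpha)\mathrel{C^{\alpha,0}_\N}(\ku f\bullet\Omega).
\]

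Next, unfold this relation using $V_\N^{\alpha,0}$, which is equality. Write $w\eqdef\ku f\bullet\Omega:\J\N$. We obtain two facts:
\[
f(\alpha)=\rV_w(0,\alpha),\qquad \forall\beta\bigl(\alpha=_{\rM_w(0,\alpha)}\beta\to\rV_w(0,\beta)=\rV_w(0,\alpha)\bigr).
\]
Since the closed term $w$ does not depend on $\alpha$, the first equality holds for every input. In particular it gives $f(\beta)=\rV_w(0,\beta)$ for the $\beta$ in the second fact. Hence, whenever $\alpha=_{M_f(\alpha)}\beta$, we have
\[
f\alpha=\rV_w(0,\alpha)=\rV_w(0,\beta)=f\beta.
\]
So $M_f$ is a modulus of continuity, and it is a $\T$-term because $\ku f$, $\bullet$, $\Omega$ and the projection $\rM$ all are.

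The only delicate point is this last step, which uses $f(\beta)=\rV_w(0,\beta)$ at a different input $\beta$. This requires instantiating the relation at $\beta$ rather than at $\alpha$. It is legitimate because the whole argument above is uniform in $\alpha$ and the translated term $w$ is independent of the oracle, exactly as in the proof of Theorem~\ref{thm:cont}.
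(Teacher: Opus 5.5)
Your proposal is correct and matches the paper's proof step for step: the fundamental theorem at $n=0$, the application property~\eqref{eq:kuroda-application} and $\alpha\mathrel{C_\Baire^{\alpha,0}}\Omega$ give $f(\alpha)\mathrel{C_\N^{\alpha,0}}\ku f\bullet\Omega$, and then the stability clause together with the value equation instantiated at~$\beta$ yields $f\alpha=f\beta$. Your observation that the closed term $\ku f\bullet\Omega$ does not depend on~$\alpha$, so the value equation can be reused at~$\beta$, is the same step the paper describes as \emph{applying the same value equation at~$\beta$}.
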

\begin{proof}
The fundamental theorem and~\eqref{eq:kuroda-application} give $f(\alpha)\mathrel{C_\N^{\alpha,0}}\ku f\bullet\Omega$, hence $\rV_{\ku f\bullet\Omega}(0,\alpha)=f(\alpha)$. If $\alpha=_{M_f(\alpha)}\beta$, the stability clause gives $\rV_{\ku f\bullet\Omega}(0,\beta)=f(\alpha)$. Applying the same value equation at~$\beta$ identifies this with~$f(\beta)$. All operations in the definition of $M_f$ are terms of~$\T$.
\end{proof}

\medskip\noindent\textbf{Evaluation order and comparison.}
The two translations account for arguments differently. In the Kuroda-style clause $\ku{tu}=\ku t\bullet\ku u$, the two uses of $\kappa$ obtain the function and its argument from their respective computations before applying the former to the latter. This gives the translation a call-by-value character. In the Gentzen-style clause $(tu)^\J=t^\J u^\J$, application uses no $\kappa$: the translated argument contributes information only when the function uses it. This gives the translation a call-by-name character.

The difference appears in the moduli extracted from
\[
t\eqdef\lambda\alpha.\rec(\alpha 0,\lambda n\,r.0,1):\Baire\to\N.
\]
This function is constantly zero, since the single recursion step discards the recursive result. The Kuroda-style translation accounts for the base argument $\alpha 0$ before applying the step, and its generic read contributes the bound~$1$. Hence the construction above gives $M_t(\alpha)=1$. In the Gentzen-style translation of Section~\ref{sec:cont}, the step never uses the translated base argument, so the extracted modulus is~$0$. Both moduli certify the continuity of the same function.

\subsection{The Kolmogorov-style monadic translation}
\label{sec:kolmogorov}

Barthe and Uustalu's call-by-name continuation-passing transformation~\cite{BU:CPS} corresponds to Kolmogorov's negative translation. Replacing the continuation monad with a generalized nucleus gives the Kolmogorov-style monadic translation~\cite{uustalu:monad}. Its type translation places $\J$ at every subtype: a type $\rho$ becomes $\J\ko{\rho}$, where
\[
\ko{\N}\eqdef\N,
\qquad
\ko{\sigma\mathbin{\Box}\tau}
\eqdef \J\ko{\sigma}\mathbin{\Box}\J\ko{\tau}
\quad\text{for }\Box\in\{\to,\times,+\}.
\]
Each term $\Gamma\vdash t:\rho$ is translated to a term $\ko\Gamma\vdash\ko t:\J\ko\rho$, where $\ko\Gamma$ replaces each $x:\sigma\in\Gamma$ with a fresh variable $\bar{x}:\J\ko\sigma$.
In particular, application uses the operation: for $f:\J(\sigma\to\J\tau)$ and $a:\sigma$, define
\[
f\diamond a\eqdef
\kappa(\lambda g^{\sigma\to\J\tau}.ga,\,f).
\]
For $t:\sigma\to\tau$ and $u:\sigma$, the translated terms have types
$\ko t:\J(\J\ko\sigma\to\J\ko\tau)$ and $\ko u:\J\ko\sigma$.
Thus $\ko{tu}\eqdef\ko t\diamond\ko u$, using $\diamond$ with domain $\J\ko\sigma$ and result parameter $\ko\tau$. The remaining term clauses are given in~\cite{uustalu:monad}.

The three translations handle function types and application differently. The Gentzen-style translation gives $\sigma^\J\to\tau^\J$ and thus translates application directly. The Kuroda-style translation gives $\J(\ku\sigma\to\J\ku\tau)$, whereas the Kolmogorov-style one gives $\J(\J\ko\sigma\to\J\ko\tau)$. Accordingly, Kuroda-style application uses $\kappa$ to obtain both the function and its argument, while Kolmogorov-style application uses it only to obtain the function, passing the translated argument directly.

\section{Agda formalization}
\label{sec:agda}

The Agda development is available at \url{https://github.com/cj-xu/GentzenTrans}.

We represent the syntax of System~$\T$ using de Bruijn indices and give its standard interpretation into Agda~\cite{agda:wiki}: types are interpreted as Agda types, and terms as functions from assignments to their contexts to elements of their interpreted types.

The logical relations are defined between elements of the interpretations of source and translated types, rather than directly between terms as in the paper.
The fundamental theorem therefore relates the interpretation of a term under a source assignment to the interpretation of its translation under a related assignment to the translated context.
Assignments and their relation are handled explicitly throughout the proof, and the computation rules of~$\T$ are validated by the interpretation.

Function extensionality is not derivable in the intensional type theory used by our Agda development.
Theorems that require it take it as an explicit argument.

\section{Conclusions and future work}
\label{sec:conclusion}

The Gentzen-style monadic translation provides a common framework for extracting computational witnesses from terms of System~$\T$.
A choice of nucleus determines the translation, and the fundamental theorem reduces its correctness to conditions on the nucleus.
The applications illustrate the range of witnesses obtainable in this way.
The extensions to sums and lists broaden the source language, while the comparison with the Kuroda-style translation shows how different translations can yield different witnesses for the same property.

Two directions merit further investigation.
The first is a systematic extension to strictly positive datatypes, generalizing the treatment of sums and lists.
The second is to understand how nuclei can be related, compared, and combined.
Relations between nuclei could allow us to compare the witnesses extracted by their translations, while suitable combinations could capture information from several interpretations at once.
The aim is to develop a uniform fundamental theorem supporting these constructions and comparisons.

\bibliographystyle{plainurl}
\begingroup
\appto\UrlBreaks{\do\-}
\bibliography{gmt-bib}
\endgroup

\end{document}